\documentclass[a4paper,12pt]{article}

\usepackage[T1]{fontenc}
\usepackage[utf8]{inputenc}
\usepackage[italian, english]{babel}

\usepackage{comment}
\usepackage{appendix}

\usepackage{amsmath}
\usepackage{amssymb}
\usepackage{amsthm}
\usepackage{thmtools,thm-restate}
\usepackage{amsfonts}
\usepackage{mathtools}

\usepackage{braket}

\usepackage{chemformula}

\usepackage{graphicx}
\usepackage{circuitikz}
\usepackage{pgfplots}

\usepackage{subfig}

\setcounter{tocdepth}{1}

\newcommand{\numberset}{\mathbb}
\newcommand{\N}{\numberset{N}}
\newcommand{\Z}{\numberset{Z}}
\newcommand{\R}{\numberset{R}}
\newcommand{\C}{\numberset{C}}

\renewcommand{\vec}{\boldsymbol}
\renewcommand{\dag}{^{\dagger}}
\renewcommand{\star}{^*}
\renewcommand{\vec}{\boldsymbol}
\renewcommand{\epsilon}{\varepsilon}
\renewcommand{\rho}{\varrho}

\newtheoremstyle{classicdef}{12pt}{12pt}{}{}{\bfseries}{.}{1em}{}
\newtheoremstyle{classicthm}{12pt}{12pt}{\itshape}{}{\bfseries}{.}{1em}{}

\theoremstyle{classicthm}

\newtheorem{lemma}{Lemma}
\newtheorem{proposition}{Proposition}
\newtheorem{corollary}{Corollary}

\theoremstyle{classicdef}
\newtheorem{definition}{Definition}

\newtheorem{remark}{Remark}

\newcommand{\defsym}{\overset{\mathrm{def}}{=}}

\newcommand{\algebra}{\mathcal{A}}

\newcommand{\identity}{\numberset{I}}

\DeclarePairedDelimiter{\abs} {\lvert} {\rvert}

\DeclarePairedDelimiter{\Bigabs} {\Big\lvert} {\Big\rvert}
\DeclarePairedDelimiter{\norm} {\lVert} {\rVert}
\DeclarePairedDelimiter{\bignorm} {\big\lVert} {\big\rVert}
\DeclarePairedDelimiter{\Bignorm} {\Big\lVert} {\Big\rVert}
\DeclarePairedDelimiter{\expvalue} {\langle} {\rangle}
\DeclarePairedDelimiter{\bigexpvalue} {\big\langle} {\big\rangle}
\DeclarePairedDelimiter{\Bigexpvalue} {\Big\langle} {\Big\rangle}

\newcommand{\Omegastate}{\Omega}

\newcommand{\matrixset}{\mathcal{M}}
\newcommand{\representation}{\pi}
\newcommand{\representationprimed}{\pi'}

\DeclareMathOperator*{\mlim}{m-lim}
\DeclareMathOperator{\bigo}{\mathcal{O}}

\DeclareMathOperator{\Realpart}{Re}
\renewcommand{\Re}{\Realpart}

\newcommand{\omeganb}{\Omega^N_\beta}

\newcommand{\braketmeso}[1]{\bra{n_L'n_R'} #1 \ket{n_Ln_R}}
\newcommand{\Braketgns}[1]{\Bra{n_L'n_R'} #1 \Ket{n_Ln_R}_\beta^N}
\newcommand{\braketgns}[1]{\Bra{n_L'n_R'} #1 \Ket{n_Ln_R}_\beta^N}


\textwidth=17cm
\textheight=22.5cm

\topmargin -1 cm
\oddsidemargin -0.3cm

\begin{document}

\title{\bf A quantum fluctuation description\\ of charge qubits}

\author{F. Benatti$^{1,2}$, F. Carollo$^{3}$, R. Floreanini$^2$, H. Narnhofer$^4$ and F. Valiera$^5$\\
\\
\small ${}^1$Dipartimento di Fisica, Universit\`a di Trieste, Trieste, 34151 Italy\\
\small ${}^2$Istituto Nazionale di Fisica Nucleare, Sezione di Trieste, 34151 Trieste, Italy\\
\small ${}^3$Institut f\"ur Theoretische Physik, Universit\"at T\"ubingen, Auf der Morgenstelle 14,\\ 
\small72076 T\"ubingen, Germany\\
\small ${}^4$Faculty of Physics, University of Wien, A-1091, Vienna, Austria\\
\small ${}^5$I. Institut f\"ur Theoretische Physik, Universit\"at Hamburg, Notkestraße 9-11, 22607 Hamburg
}
\date{\null}

\maketitle
\abstract{}

\noindent
We consider a specific instance of a superconducting circuit, the so-called charge-qubit, 
consisting of a capacitor and a Josephson junction. Starting from the microscopic description of the latter in terms of two tunneling BCS models in the strong-coupling quasi-spin formulation, we derive the Hamiltonian governing the quantum behavior of the circuit in the limit of a large number $N$ of quasi-spins. Our approach relies on the identification of suitable {\it quantum fluctuations}, {\it i.e.} of collective quasi-spin operators, which account for the presence of fluctuation operators in the superconducting phase  that retain a quantum character in spite of the large-$N$ limit. We show indeed that these collective quantum fluctuations generate the  Heisenberg algebra
on the circle and that their dynamics reproduces the one of the quantized charge-qubit, without the need of a phenomenological ``third quantization'' of a semiclassically inspired model.
As a byproduct of our derivation, we explicitly obtain the temperature dependence of the junction critical Josephson current in the strong coupling regime, a result which is not directly  accessible using standard approximation techniques.
\section{Introduction}
\label{intro}
The investigation of emergent properties of quantum many-body systems often requires a description of the system in terms of a suitable 
set of physical observables which account for the behaviour of these systems at large scales. They are usually
observables of collective type, involving all the microscopic degrees of freedom of the system.

Typical collective observables are the so-called mean-field observables 
which account for the average properties of many-body systems. In a system consisting of $N$ microscopic components, mean-field observables are defined as the sum over all components of a same single-component microscopic observable, rescaled by a factor $1/N$.
Their characteristic trait is that, in the large-$N$ limit, these observables form a commutative algebra, thus losing
any quantum character and behaving classically.

Interestingly, there exist other types of collective observables, which still involve sums over all the microscopic components but are rescaled with different powers of $N$. A different rescaling can  allow these observable to retain some quantum 
properties in the large-$N$ limit. For this reason, they are  suitable for modelling quantum phenomena at the \textit{mesoscopic} scale, 
halfway between the quantum microscopic and the classical macroscopic ones.
They have being termed \textit{quantum fluctuations} \cite{Verbeure,Benatti}. 
In many-body systems with sufficiently short-range correlations,  quantum fluctuations are defined, in analogy with a classical central limit theorem, as the 
sums of deviations of microscopic observables from their mean-values, rescaled by $1/\sqrt{N}$. In the large-$N$ limit, these operators generate the Heisenberg algebra of position and momentum operators. 
Different scalings with $N$ of collective system operators are sometimes necessary --- i.e. whenever the state of the system contains long-range correlations --- and give rise to different
non-commutative algebras as $N$ becomes large. All these large-$N$ emerging algebras are referred to as \textit{fluctuation algebras}; they turn out to provide anatural setup for a proper investigation of the mesoscopic properties of many-body systems.
Specifically, the theory of quantum fluctuations has been often employed to describe the \textit{mesoscopic dynamics} of a many-body system, \textit{i.e.}
the quantum dynamics of the fluctuation algebra 
emerging as \hbox{large-$N$} limit of the quantum microscopic many-body dynamics.

An attractive instance of quantum properties emerging in mesoscopic sized many-body systems is provided by the
so-called \textit{superconducting quantum circuits} \cite{Vion}-\cite{Stancil}, 
a quite developed technology for the physical implementation of prototype quantum computers. 
These electric circuits are based on Josephson junctions \cite{Josephson}-\cite{Tafuri},  
namely on two superconducting electrodes, separated by a thin insulating barrier, 
able to sustain tunnelling currents depending non-linearly on the phase difference between the complex order-parameters 
associated with the two electrodes. Remarkably, at low temperatures, superconducting circuits exhibit a quantum behaviour 
that can be effectively described by a phenomenological quantization, a sort of ``third quantization'', 
of their constitutive, classical equations of motion. 
In particular, the excess number of Cooper pairs on the junction and the phase difference between the superconducting condensates 
can be interpreted as conjugated canonical quantum observables,
acting like momentum and angle variables for a particle on a circle inside an anharmonic potential.

Focussing on the specific case of the so-called charge qubit superconducting circuits consisting of a capacitor and a Josephson junction, we shall show that their phenomenological quantum Hamiltonian emerges as the generator of the mesoscopic dynamics of specific quantum fluctuations dictated by the structure 
of the canonical BCS model, without the need of the above-mentioned
``third quantization'' of a semiclassical, phenomenologically inspired model. In this case, quantum fluctuations generate the Heisenberg algebra on the circle,
with collective observables rescaled by $1/N$ playing the role of  
angle-like coordinate operators, while the corresponding conjugate momenta-like operators turn out to be
realized by collective observables without any rescaling.

After reviewing the basic features of a charge qubit superconducting quantum circuit,
we shall analyze the BCS model of superconductivity \cite{Bardeen}-\cite{Abrikosov} 
in its {\it strong-coupling} limit \cite{Thouless1}-\cite{Rodrigues}.
We shall adopt the so-called {\it quasi-spin} formulation \cite{Anderson} and physically motivate the use of the so-called \textit{GNS}-representation of quasi-spin operators; by means of  this technical setting, we shall single out
suitable collective quantum observables, the above-mentioned quantum fluctuations,
thus providing a well-behaved algebraic structure at the mesoscopic scale, as $N$ becomes large \hbox{(Sections 3-5)}. 
This will be the basis for the description of a Josephson junction 
and the derivation of its limiting dynamics, in Section 6.
By inserting the junction in a capacitive circuit, we then directly retrieve the phenomenological charge qubit Hamiltonian 
as the generator of mesoscopic time-evolution.
As a byproduct, the temperature dependence of the critical Josephson current in the strong coupling regime 
is rigorously obtained, a result not easily derivable using standard techniques \cite{Ambegaokar}.
The Appendices contain technical details and proofs.
\section{Superconducting charge qubits}
\label{sec:chargequbit}

According to the BCS theory \cite{Bardeen}-\cite{Abrikosov}, 
the transition in metals from the standard to the superconducting phase 
is due to the presence of an attractive, phonon-mediated interaction among electrons that makes two electrons 
having opposite momentum and spin become correlated and form a so-called \emph{Cooper pair}. 
Below a critical temperature $T_c$, Cooper pairs are created on a macroscopic scale, 
giving rise to the appearance of a temperature dependent order parameter
and of an energy gap, corresponding to the energy necessary to create the first excited state of the system. The existence of an energy gap 
explains the absence of dissipation by resistance at temperatures smaller than those able to overcome the gap.

By pairing two superconducting electrodes separated by a thin insulating barrier,
one can then form a Josephson junction \cite{Josephson}-\cite{Tafuri}.
As a consequence of Cooper pairs tunnelling between the barrier, the phases of the (complex) order parameters of the two superconductors
are no longer independent and their difference leads to the generation of an electrical current, the Josephson current, even in absence of an external potential across the barrier.
Nowadays, Josephson junctions are the basic tools for the construction and manipulation of qubits 
using superconducting quantum circuits \cite{Vion}-\cite{Stancil}, 
{\it i.e.} circuits in which electric current, voltage, charge, 
and flux are promoted to quantum observables.

More precisely, a quantum circuit is a network of metallic, insulating, and semiconducting elements, such as capacitors, inductors, Josephson junctions, which, combined with appropriate voltage sources, control the behaviour of the inside circulating current.
The standard approach to modelling these devices consists in first identifying suitable 
canonically conjugated position and momentum like quantities, constructing with them a classical Hamiltonian 
yielding the circuit equations of motion and finally quantizing it.
The whole procedure thus appears as a sort of \emph{third quantization}: the circuit is built with quantum
devices, as are the Josephson junctions, whose macroscopic behavior within the circuit, though rooted in second quantization, is described by a classical, phenomenological Hamiltonian which is then re-quantized. 

In the following, it will be shown that this phenomenolgical prescription can be rigorously motivated and derived by starting from
the microscopic
BCS theory modelling the superconductors involved in the circuits; indeed, through the choice of suitable collective observables
it will be possible to recover, in the limit of a large number of particles, the fully quantized Hamiltonian governing the circuit dynamics,
without the need of a spurious re-quantization.

Although the procedure is general, for sake of concreteness we shall focus on a particular class of superconducting circuits,
those leading to the so-called \emph{charge qubit}. The latter consists of a large superconducting reservoir connected 
to a superconducting island through a Josephson junction. The island is attached to a voltage $V_\mathrm{g}$
through a gate electrode: upon controlling this external voltage, it is then possible to generate an offset charge in the island.
The presence of the junction is required in order to create a non-linearity in the system energy spectrum, so that the two
lowest energy levels can be used to construct a logical qubit.

The equations of motion governing the dynamics of such a circuit can be derived from a phenomenological Hamiltonian
written in terms of an angle variable $\varphi\in[0,2\pi]$, the phase drop across the Josephson junction, and its conjugate momentum $p_\varphi$, measuring the difference in number of Cooper pairs between reservoir and island: 
\begin{equation}
\label{charge-qubit}
H = \mathcal{E}_\mathrm{C}(p_\varphi-n_\mathrm{g})^2-\mathcal{E}_\mathrm{J}\cos(\varphi)\ ,\qquad 
\mathcal{E}_\mathrm{C}\equiv\frac{(2e)^2}{2C}\ ,\qquad\ n_\mathrm{g}\equiv \frac{C_\mathrm{g}V_\mathrm{g}}{2e},
\end{equation}
where the total circuit capacitance $C=C_\mathrm{J}+C_\mathrm{g}$ is the sum of junction, $C_\mathrm{J}$, and gate, $C_\mathrm{g}$,
contributions, while $\mathcal{E}_\mathrm{C}$ and $\mathcal{E}_\mathrm{J}$ are the charging and Josephson energies.
The gate voltage $V_\mathrm{g}$ controls the induced offset charge $n_\mathrm{g}$, where $2e$ 
is the Cooper pair charge. In the charging regime, $\mathcal{E}_\mathrm{C}\gg\mathcal{E}_\mathrm{J}$
only the two lowest-lying charge states, differing by one Cooper pair, are relevant, thus forming
a qubit.

The phenomenologically inspired Hamiltonian (\ref{charge-qubit}) 
is classical and needs to be quantized; in order to do that, one has
to promote angle and momentum variables to quantum operators $\hat\varphi$ and $\hat p_\varphi$,
acting on the Hilbert space $L^2([0,2\pi])$ of periodic square integrable functions on the circle. 
A naive procedure would lead one to introduce the operators:
\begin{equation}
\label{CCR_circle}
    (\hat p_\varphi\psi)(\varphi)\mapsto-i\partial_\varphi\psi(\varphi)\ ,\qquad (\hat\varphi\, \psi)(\varphi)\mapsto\varphi\psi(\varphi).
\end{equation}
However, these operators suffer from the fact that the domain of self-adjointness of $\hat p_\varphi$ 
consists of periodic functions that are no longer periodic when multiplied by $\hat{\varphi}$. Therefore, one has to resort to Weyl-like unitary operators:
\begin{equation}
\label{weyl-circle}
\left(e^{i\alpha \hat p_\varphi}\psi\right)(\varphi)=\psi(\varphi+\alpha),\quad 
\left(e^{i n\hat\varphi}\psi\right)(\varphi)\mapsto e^{i n\varphi}\psi(\varphi),
\end{equation}
with  $n\in\mathbb{Z}$. Then, for $\alpha,\alpha'\in\R$ and $m,n\in\Z$,
\begin{equation}
    e^{i\alpha \hat p_\varphi}\, e^{ in\hat\varphi} = e^{ in\alpha}\, e^{ in\hat\varphi}\, e^{i\alpha \hat p_\varphi},
\label{weyl1}
\end{equation}
which can be equivalently recast as
\begin{equation}
\label{weyl2}
[\hat p_\varphi,e^{i\hat\varphi}] = e^{i\hat\varphi}\ ,\qquad\ e^{-im\hat\varphi}\,\hat p_\varphi\, e^{im\hat\varphi} = \hat p_\varphi+m\ ,
\quad m\in\Z\ .
\end{equation}
%
In this way, the angular momentum operator $\hat{p}_\varphi$ and the Weyl phase operator form the \emph{Heisenberg algebra on the circle}.
It can be represented on the Hilbert space $L^2\bigl([0,2\pi]\bigr)$ equipped with 
the orthonormal basis $\{|n\rangle\}$, $n\in\Z$, of ``plane waves'',
\begin{equation}
\langle\theta |n\rangle = \frac{1}{\sqrt{2\pi}} e^{i n \theta}\ .
\label{state-n}
\end{equation}
These vectors are obtained by acting on the ``vacuum" state $\langle\theta |0\rangle=1/\sqrt{2\pi}$, $\hat{p}_\varphi\ket{0}=0$,
$$
|n\rangle = e^{i n \hat{\varphi}} |0\rangle\ ,
$$
and, because of~\eqref{weyl2}, are
eigenstates of the momentum $\hat{p}_\varphi$, $\hat{p}_\varphi |n\rangle= n |n\rangle$, satisfying the orthogonality condition:
\begin{equation}
\label{scalarproduct}
\braket{n|m}=\braket{0 | e^{i(m-n)\hat{\varphi}} | 0} = \frac{1}{2\pi}\int_0^{2\pi} \, d\theta e^{i(m-n)\theta} = \delta_{m,n}\ .
\end{equation}

\section{Strong coupling limit of the BCS model}
\label{sec:strongcoupling}

Superconductivity can be described as a microscopic phenomenon originated  by the condensation of bounded pairs of electrons, the
Cooper pairs. The low-energy pairing physics is very well captured by the BCS Hamiltonian \cite{Bardeen}-\cite{Abrikosov}:
\begin{equation}
\label{BCS}
H_{BCS} = \sum_{\vec{k}}\epsilon_{\vec{k}}(c\dag_{\vec{k}\uparrow}c_{\vec{k}\uparrow}
+c\dag_{-\vec{k}\downarrow}c_{-\vec{k}\downarrow})-
\sum_{\vec{k}}\, V_{ \vec{k} \vec{k}'}\,
c\dag_{\vec{k}'\uparrow}c\dag_{-\vec{k}'\downarrow}c_{-\vec{k}\downarrow}c_{\vec{k}\uparrow} \ ,
\end{equation}
where $c_{\vec{k}\sigma}$ are fermionic annihilation operators of electronic states of momentum $\vec{k}$, spin $\sigma$
and energy $\epsilon_{\vec{k}}$, while $V_{ \vec{k} \vec{k}'}$ describes an attractive two-body interaction.

An often adopted approximation to this Hamiltonian assumes the pairing potential $V_{ \vec{k} \vec{k}'}$
to be a positive constant $V\geq 0$ for all indices $\vec{k}$, $\vec{k}'$ in a region around the Fermi energy
determined by the cutoff $\omega_\mathrm{D}$, the Debye energy of the crystal where the electrons move,
and vanishing outside this region. As the interaction term acts only within the cutoff 
region around the Fermi energy $\epsilon_\mathrm{F}$, focusing on the physics relative to this energy region, 
one can make an additional simplification and take all the single-electron energy levels 
$\epsilon_{\vec{k}}$ around the Fermi energy to be constant. This is the essence of the so-called
{\it strong-coupling} approximation \cite{Thouless1}-\cite{Rodrigues}, which is justified when $\omega_\mathrm{D}\ll V$
so that the $\vec{k}$-dependence in $|\epsilon_{\vec{k}} - \epsilon_\mathrm{F}| < \omega_\mathrm{D}$ 
can be safely ignored.

In addition, notice that the Hamiltonian (\ref{BCS}) involves only pair operators of the form
$c_{-\vec{k}\downarrow}c_{\vec{k}\uparrow}$;
they generate an $su(2)$ algebra and therefore they can be represented by quasi-spin matrices \cite{Anderson}:
\begin{equation}
\label{su2}
c_{-\vec{k}\downarrow}c_{\vec{k}\uparrow}=\sigma^{(\vec{k})}_+\ ,\quad 
(c_{-\vec{k}\downarrow}c_{\vec{k}\uparrow})^\dagger=\sigma^{(\vec{k})}_-\ ,\quad
c\dag_{\vec{k}\uparrow}c_{\vec{k}\uparrow} + c\dag_{-\vec{k}\downarrow}c_{-\vec{k}\downarrow}=1-\sigma^{(\vec{k})}_z\ ,
\end{equation}
where $\sigma_i$, $i=x,y,z,$ are Pauli matrices, and $\sigma_\pm=(\sigma_x \pm i \sigma_y)/2$.
By enumerating the finite number $N$ of available energy levels in the cutoff region with an integer index $k$, 
the strong-coupling limit of the BCS Hamiltonian (\ref{BCS}) 
can finally be rewritten in the quasi-spin language as \cite{Thirring2,Thirring3}
\begin{equation}
\label{BCS-spin}
H^N= -\epsilon\sum_{k=1}^N\sigma^{(k)}_z-\frac{2\,T_c}{N}\sum_{k,k'=1}^N\sigma^{(k)}_+\sigma^{(k')}_-,
\end{equation}
up to an irrelevant constant contribution. For later convenience we have set $V=2\,T_c/N$, 
where the constant $T_c$ plays the role of the
critical temperature. Equivalently, in terms of collective spin operators
\begin{equation}
\label{collective-spin}
    S^N_i = \frac{1}{2}\sum_{k=1}^N\sigma_i^{(k)}\ ,\quad i=x,y,z,\qquad
    S^N_\pm = S^N_x\pm iS^N_y\ ,
\end{equation}
one has:
\begin{equation}
\label{BCS-spin-2}
H^N= -2\epsilon\, S^N_z-\frac{2\,T_c}{N}\, S^N_+\, S^N_-\ .
\end{equation}
Although a simplified version of the original
BCS Hamiltonian, the model based on (\ref{BCS-spin}), (\ref{BCS-spin-2}) is able to capture the relevant characteristic features
of superconducting devices, including Josephson junctions \cite{Thirring2,Rodrigues,Unnerstall,Lauwers}.

In particular, the Hamiltonian (\ref{BCS-spin-2}) 
can be treated as in the standard case by considering its mean-field approximation,
\begin{equation}
\widetilde H^N= -2\epsilon\, S^N_z-2\,T_c\, \Big( S^N_+\, \langle\!\langle S^N_-\rangle\!\rangle_\beta^N 
+ \langle\!\langle S^N_+\rangle\!\rangle_\beta^N \, S^N_-\Big)\equiv -\sum_{k=1}^N \sum_{\nu=1}^3 \omega_\nu\cdot \sigma_\nu^{(k)}\ ,
\label{mean-field}
\end{equation}
where $\omega_1=\omega_2=T_c$, $\omega_3=\epsilon$ and $\langle\!\langle \,\cdot\,\rangle\!\rangle_\beta^N$ is the expectation value with respect to the mean-field 
Gibbs state at the inverse temperature $\beta$:
\begin{equation}
\langle\!\langle \cdot\rangle\!\rangle_\beta^N = 
\frac{ {\rm Tr}\big[ e^{-\beta\, \widetilde H^N} \,\cdot\,\big]}{{\rm Tr}\big[ e^{-\beta\, \widetilde H^N}\big]}\ .
\label{mf-average}
\end{equation}
The required consistency condition,
\begin{equation}
\beta_c\,\omega = \tanh (\beta\,  \omega)\ ,\qquad \omega = |\vec\omega|=\sqrt{\epsilon^2+\,4\,T_c^2\,\Delta^2}\ ,
\label{consistency}
\end{equation}
allows for a non-vanishing value of the ``gap'',
\begin{equation}
\frac{\langle\!\langle S^N_+\rangle\!\rangle_\beta^N}{N}\equiv \Delta\, e^{i\phi}\ ,
\label{gap}
\end{equation}
only for temperature $T<T_c=1/\beta_c$, thus recovering the standard result. 

Notice that, as defined in~\eqref{gap}, the gap  is dimensionless and that the relation (\ref{consistency}) fixes its modulus, $\Delta$, but not its phase $\phi$. This is to be expected
as the Hamiltonian (\ref{BCS-spin-2}) is invariant under rotations around the $z$-axis, corresponding to the gauge
invariance of the original BCS Hamiltonian, while the three-dimensional vector $\vec\omega$ in (\ref{mean-field})
points to a given direction.
Remarkably, gauge invariance can be recovered, at least in the large-$N$ limit, by averaging over $\phi$; indeed,
the expectation in the Gibbs state constructed with the Hamiltonian (\ref{BCS-spin-2}),
\begin{equation}
\langle \,\cdot\,\rangle_\beta^N = 
{\rm Tr}\big[\rho_\beta^N\, \cdot \big]\ ,
\qquad \rho_\beta^N=\frac{e^{-\beta\,  H^N}}{{\rm Tr}\big[ e^{-\beta\,  H^N}\big]}\ ,
\label{Gibbs-state}
\end{equation}
can be obtained as \cite{Thirring2}:
\begin{equation}
\lim_{N\to\infty}\langle \,\cdot\,\rangle_\beta^N =
\lim_{N\to\infty}\frac{1}{2\pi}\int_0^{2\pi} \! d\phi \, \langle\!\langle \,\cdot\,\rangle\!\rangle_\beta^N\ ,
\label{Thirring}
\end{equation}
where $\langle\!\langle \,\cdot\,\rangle\!\rangle_\beta^N$ in the r.h.s. is computed with a mean-field Hamiltonian
(\ref{mean-field}) having a gap with phase $\phi$.

As an application, let us compute the large-$N$ limit of the average 
$\langle S^N_+S^N_-\rangle_\beta^N/N^2$, where the rescaling with $N$ is necessary 
in order to obtain a finite result.
Applying (\ref{Thirring}), one gets:
\begin{equation*}
\mathfrak{c}^2 \equiv \lim_{N\to\infty} \frac{1}{N^2}\langle S^N_+S^N_-\rangle_\beta^N=
\lim_{N\to\infty}\frac{1}{2\pi}\int_0^{2\pi} d\phi\, \frac{1}{N^2}\langle\!\langle S^N_+S^N_-\rangle\!\rangle_\beta^N.
\end{equation*}
In order to compute the mean-field expectation in the r.h.s., let us consider the splitting:
\begin{equation*}
\frac{1}{N^2}S^N_+S^N_-
= \frac{1}{N^2}\sum_{k=1}^N\sigma^{(k)}_+\sigma^{(k)}_- 
+ \frac{1}{N^2} \sum_{ \substack{k,k'=1\\ k\ne k'} }^N\sigma^{(k)}_+\sigma^{(k')}_-\ .
\end{equation*}
The first sum contains $N$ terms, each one a projection $\sigma^{(k)}_+\sigma^{(k)}_-=(1+\sigma_z^{(k)})/2$; as a consequence, the factor $1/N^2$ 
makes this term vanish in the large-$N$ limit.
Instead, the second sum contains $N(N-1)$ terms, each one involving two different spin operators, thus making it 
generally non-vanishing. Since in the mean-field theory different sites are statistically independent 
and the Hamiltonian is permutation invariant, the mean-field average of $\sigma_i^{(k)}$ result independent on $k$;
therefore,
\begin{equation*}
\frac{1}{N^2} \sum_{ \substack{k,k'=1\\ k\ne k'} }^N \langle\!\langle\sigma^{(k)}_+\sigma^{(k')}_-\rangle\!\rangle_\beta^N
= \langle\!\langle\sigma_+\rangle\!\rangle_\beta^N\ \langle\!\langle\sigma_-\rangle\!\rangle_\beta^N\ ,
\end{equation*}
which in turn depends only on the gap modulus, $\Delta$, and not on its phase; thus, finally:
\begin{equation}
\mathfrak{c} = \Delta\ .
\label{Thirring-2}
\end{equation}
This result will be useful in deriving the explicit expression of the critical Josephson current.

\section{Large-$N$ formalism}
In order to implement in a mathematically controlled way the large-$N$ limit of the quasi-spin formulation of the BCS model 
introduced in the previous Section, one can resort to standard techniques developed in general for 
quantum systems with infinitely many degrees of freedom, based on the so-called 
Gelfand, Neimark and Segal (GNS) construction~\cite{Bratteli}.

Quasi-spin systems are typical examples of quasi-local $C^*$ algebras; given any state $\Omega$
on one such algebra $\mathcal{A}$, {\it i.e.} a functional from the elements of the algebra to the field of complex numbers $\C$,
one can associate to it a representation on a Hilbert space endowed with a cyclic vector 
that reproduces all expectation values relative to the given state functional. In more precise terms,
given a state $\Omega$ over $\mathcal{A}$, there always exists a cyclic representation $\pi_\Omega$ of $\mathcal{A}$
over an Hilbert space $\mathcal{H}_\Omega$, with a state vector $\Psi_\Omega$, such that
$\Omega(X) = \braket{\Psi_\Omegastate|\pi_\Omegastate{(X)}|\Psi_\Omegastate}$ for all elements $X\in\mathcal{A}$.
Moreover, the representation is unique up to unitary equivalence.

As a consequence of this GNS construction, generic state vectors in $\mathcal{H}_\Omegastate$ 
can be obtained by acting with a suitable represented operator 
$\pi_\Omegastate(X)$ on the cyclic vector $\vert\Psi_\Omegastate\rangle$. Furthermore, any matrix element in the representation space $\mathcal{H}_\Omegastate$ corresponds to an expectation value with respect to $\Omegastate$ in the abstract formulation. 
Indeed, if we let $\ket{\psi_A} = \pi_\Omegastate(A)\ket{\Psi_\Omegastate}$ and $\ket{\psi_B} = \pi_\Omegastate(B)\ket{\Psi_\Omegastate}$ be arbitrary state vectors in $\mathcal{H}_\Omegastate$, then, for all $X\in\algebra$,
\begin{equation*}
\label{ch1-eq:weak-oper-top}
\begin{split}
\braket{\psi_A|\pi_\Omegastate(X)|\psi_B} 
&=\braket{\Psi_\Omegastate|\pi_\Omegastate\dag(A)\pi_\Omegastate(X)\pi_\Omegastate(B)|\Psi_\Omegastate}  \\
& = \braket{\Psi_\Omegastate|\pi_\Omegastate(A\star XB)|\Psi_\Omegastate}\ .
\end{split}
\end{equation*}

When dealing with density matrices, the GNS construction is achieved via purification of the state.
For simplicity, let us illustrate the procedure using the example of a finite-dimensional system with Hilbert space $\C^d$,
although the construction is general. 
The set of observables for such a system is that of $d\times d$ complex matrices, $\matrixset{(d,\C)}$. 
Any density matrix $\rho$ describing a state of the system can be given in its spectral decomposition,
\begin{equation}
\label{ch1-eq:density-matrix-spectral-decomposition}
 \rho = \sum_{i=1}^d r_i\ket{r_i}\bra{r_i}\ ,\quad r_i\ge0\ ,\quad \sum_{i=1}^dr_i = 1\ , \quad\braket{r_i|r_j} =\delta_{ij}\ ,
\end{equation}
in terms of the basis formed by its eigenvectors $\ket{r_i}$. Let us now double the Hilbert space, and consider in it
the vector $\ket{\Psi_\rho}\in\C^d\otimes\C^d$ defined by
\begin{equation}
\label{ch1-eq:purified-vector}
\ket{\Psi_\rho} = \sum_{i=1}^d\sqrt{r_i}\ket{r_i}\otimes\ket{r_i}\ .
\end{equation}
Then, it turns out that all expectation values of the system observables can be computed 
as scalar products in $\C^d\otimes\C^d$:
\begin{equation}
\label{ch1-eq:purif-exp-value}
\langle X\rangle = \braket{\Psi_\rho|X\otimes\identity|\Psi_\rho}\ ,\quad\forall X\in\matrixset{(d,\C)}\ ,
\end{equation}
where $\identity$ represents the identity operator.
The map $\pi\colon X\mapsto X\otimes\identity$ provides a \emph{representation} of $\matrixset{(d,\C)}$ on the Hilbert space $\C^d\otimes\C^d$ as $\representation{(\matrixset{(d,\C)})}=\matrixset{(d,\C)})\otimes\identity\subset\matrixset{(d^2,\C)}$. 
The advantage of this construction is that the density matrix $\rho$ is replaced by the pure state vector $\ket{\Psi_\rho}$.  
Such a procedure is called \emph{purification}.

Notice that the map $\representationprimed\colon X\mapsto \identity \otimes X$ 
also provides a representation of $\matrixset{(d,\C)})$, 
$\representationprimed{(\matrixset{(d,\C)})}=\identity\otimes\matrixset{(d,\C)})$: 
it commutes with $\representation{(\matrixset{(d,\C)})}$ and is thus called its \emph{commutant}.

Given a Hamiltonian $H$ for a $d$-levels system, with spectral decomposition
\begin{equation}
    H = \sum_{i=1}^d \eta_i\ket{i}\bra{i}\ ,
\end{equation}
let $\rho_\beta$ be the Gibbs state at inverse temperature $\beta$, with spectral decomposition
\begin{equation}
\rho_\beta = \sum_{i=1}^d \rho_i\ket{i}\bra{i}\ ,\qquad \rho_i=\frac{e^{-\beta \eta_i}}{\sum_{i=1}^de^{-\beta \eta_i}}\ ;
\end{equation}
then, the corresponding purified state is given by:
\begin{equation*}
\ket{\Psi_\beta} = \sum_{i=1}^d\sqrt{\rho_i}\ket{i}\otimes\ket{i}\ .
\end{equation*}
Unlike $\rho_\beta$ which is invariant under the dynamics generated by $H$, the GNS vector $\vert\Psi_\beta\rangle$ is not. 
Indeed, one finds: 
\begin{equation*}
    e^{-itH\otimes 1}\ket{\Psi_\beta} = \sum_{i=1}^d \sqrt{\rho_i}\, (e^{-it\eta_i}\ket{i})\otimes\ket{i}.
\end{equation*}
In order to make it time-invariant, one needs to find a suitable representation of the dynamics: 
this is given by renormalizing the Hamiltonian through the subtraction from $H\otimes 1$ 
of a contribution from the commutant: $H\otimes 1\mapsto H\otimes 1-1\otimes H$. Then
\begin{equation*}
    e^{-it(H\otimes1-1\otimes H)}\ket{\Psi_\beta} = \sum_{i=1}^d \sqrt{r_i} (e^{-it\eta_i}\ket{i})\otimes(e^{it\eta_i}\ket{i})=\ket{\Psi_\beta}.
\end{equation*}

In the case of the BCS quasi-spin formulation introduced in the previous Section, 
the representation space is $\C^{2^{2N}}$ and, in the basis $\{\ket{s,s_z}\}$ of eigenstates 
of the collective Casimir operator $(\vec{S}^N)^2$ and spin component $S^N_z$ in (\ref{collective-spin}), 
the purified state associated with Gibbs state $\rho_\beta^N$ 
in (\ref{Gibbs-state}) can be expressed as:
\begin{equation}
\label{weights}
\ket{\Omega_\beta^N} = \sum_{s=0}^{N/2}\sum_{s_z=-s}^s \sqrt{\rho_\beta^N(s,s_z)}\,\bigoplus_{\alpha=1}^{d(s)}\,\ket{s,s_z}_\alpha\otimes\ket{s,s_z}_\alpha\ ,
\end{equation}
with Boltzmann weights:
\begin{equation}
\label{BoltzmannWeightsSC}
\rho_\beta^N(s,s_z)=\,\frac{e^{-\beta \eta^N(s,s_z)}}{\sum_{s'=0}^{N/2}\,d(s')\,\sum_{s_z'=-s'}^{s'}e^{-\beta \eta^N(s',s_z')}}\ ,
\end{equation}
where $\eta^N$ are the eigenvalues of $H^N$ in (\ref{BCS-spin-2}),
\begin{equation}
\eta^N(s,s_z)= -2\epsilon s_z-\frac{2T_C}{N}\bigl(s(s+1)-s_z(s_z-1)\bigr)\ ,
\label{spectrum}
\end{equation}
while $d(s)$ denotes the multiplicity of the irreducible representation with orthonormal basis 
$\ket{s,s_z}_\alpha$, $1\leq\alpha\leq d(s)$ \cite{Thirring2,Thirring3}.
Then, the expectation value of any spin-observable $X$ in the Gibbs state $\rho^N_\beta$ 
can be expressed as the following average:
\begin{equation}
\langle X \rangle_\beta^N = \braket{\Omega_\beta^N | X_N\otimes\identity \,| \Omega_\beta^N}\ .
\end{equation}
In the following, we will always work within the GNS representation.

\section{Algebra of fluctuations}

As mentioned in the Introduction, we are interested in selecting collective quasi-spin operators having a well-defined limit as $N$ becomes large. 
Specifically, we will identify collective $N$-spin operators $p^N$ and $E_\pm^N$ 
which, in the large-$N$ limit, become the momentum operators, $p_\varphi$, and angular exponentials, $e^{\pm i\varphi}$,
respectively, obeying the commutation relations of the Heisenberg algebra on the circle
given in (\ref{weyl2}) (for simplicity, carets will be henceforth omitted).

More precisely, the momentum variable is defined by means of a collective spin component in the $z$ direction 
renormalized by a contribution from the commutant, without any rescaling with powers of $N$:
\begin{equation}
p^N \equiv S^N_z\otimes\identity-\identity\otimes S^N_z\ .
\label{momentum}
\end{equation}
The reason for the renormalization is that in this way $p^N$ annihilates the Gibbs state $|\Omega_\beta^N\rangle$
introduced in the previous Section:
\begin{equation}
p^N\ket{\omeganb} = \sum_{s=0}^{N/2}\sum_{s_z=-s}^s \hskip-.2cm \sqrt{\rho^N_\beta(s,s_z)}\, 
\bigoplus_{\alpha=1}^{d(s)}(s_z-s_z)\ket{s,s_z}_\alpha\otimes\ket{s,s_z}_\alpha = 0\ .
\label{on-vacuum}  
\end{equation}
The relevant part of the spectrum%
\footnote{The other part of the spectrum of $p^N$ is associated with its component in the commutant algebra.} 
of $p^N$ is then retrieved by acting on $|\Omega_\beta^N\rangle$ with $S^N_\pm\otimes\identity$. 
Indeed, the $su(2)$ commutation relation between $S^N_z$ and $S^N_+$ leads to
\begin{equation}
\Big[ p^N , (S^N_\pm)^n\otimes\identity\Big]= \pm n\, \big[(S^N_\pm)^n\otimes\identity\big]\ .
\end{equation}
Then, taking into account that, by construction, $(S^N_\pm)^{N+1}=0$, one gets
that the vectors $\big[(S^N_\pm)^n \otimes\identity\big] \ket{\omeganb}$, for $0\leq n\leq N$, are eigenvectors of $p^N$:
\begin{equation}
p^N\,  \big[(S^N_\pm)^n \otimes\identity\big] \ket{\omeganb} = \pm n\, \big[(S^N_\pm)^n \otimes\identity\big] \ket{\omeganb}\ .
\end{equation}
Thus, the thermal state $\ket{\omeganb}$ constitutes the vacuum in the GNS representation and $p^N$ can be interpreted
as the operator counting the number of quasi-spin excitations above it.

As ``conjugate'' operator to $p^N$, it is convenient to introduce the rescaled quantities:
\begin{equation}
E^N_\pm\equiv \frac{S^N_\pm\otimes\identity}{\mathfrak{c} N}\ ,
\label{phase}
\end{equation}
with $\mathfrak{c}$ as in (\ref{Thirring-2}). These operators are not unitary:
\begin{equation*}
E^N_- = (E_+^N)\dag\ ,\quad    E_+^N\,E_-^N = \frac{S^N_+S^N_-\otimes \identity}{(\mathfrak{c} N)^2}\ne\identity\ ;
\end{equation*}
nevertheless, $p^N$ and $E_\pm^N$ satisfy the characteristic algebraic relation for the Heisenberg algebra on the circle:
\begin{equation}
\label{algebra-N}
\Big[p^N, E_\pm^N\Big] = \pm\,E_\pm^N.
\end{equation}
The collective operators $p^N$ and $E_\pm^N$ are examples of quantum fluctuations.
Notice that, in order to obtain the Heisenberg algebra on the circle in the large-$N$ limit, as discussed in detail in the following, 
only $E_\pm^N$ need to be rescaled with $N$, 
while no such rescaling is necessary for $p^N$; due to this ``asymmetry'', these operators are sometimes
referred to as {\it abnormal fluctuations} \cite{Verbeure}.

We want now to study the large-$N$ behaviour of correlation functions 
involving $p^N$ and $E_\pm^N$. In order to simplify the notation, 
we denote expectation values in the GNS cyclic vector $\ket{\omeganb}$ as:
\begin{equation*}
\expvalue{\,\cdot\,}^N_\beta \equiv \braket{\omeganb | \,\cdot\,| \omeganb}\ .
\end{equation*}
Let $\{\alpha_j\in\R,\ m_j,n_j\in\N\}$, $j=1,2,\ldots, r$, be a collection of (possibly null) constants;
we will focus upon correlation functions of the form:
\begin{equation}
\label{F-correlation}
F_\beta^N\big(\{\alpha_j\},\{m_j\},\{n_j\}\big) \equiv 
\Bigexpvalue{\prod_{j=1}^r e^{i\alpha_j p^N} (E_-^N)^{n_j} \, (E_+^N)^{m_j}}_\beta^N\ .
\end{equation}
\begin{remark}

Under the norm closure of their algebra, polynomials  in the operators $(E^N_\pm)^m$ and $e^{i\alpha p^N}$ generate a $C^*$  algebra. 
Given a state on such an algebra, a dense subspace of vectors in the associate Hilbert space of the corresponding 
GNS-representation is built with the action of any such product on the GNS cyclic vector. 
By using the relations \eqref{algebra-N}, one can then express any scalar product 
in such Hilbert space as a suitable correlation function
$F_\beta^N\big(\{\alpha_j\},\{m_j\},\{n_j\}\big)$.
\end{remark}

The following result will then give a precise meaning to the large-$N$ limit representation of the fluctuation
operators $p^N$ and $E_\pm^N$ in terms of momentum $\hat{p}_\varphi$ and angle variable exponentials $e^{\pm i\hat{\varphi}}$, respectively;
it defines correlations at the {\it mesoscopic} scale, in between the microscopic and the classical, macroscopic ones.  

\begin{restatable}[Mesoscopic correlation functions]{theorem}{reconstruction}
\label{ch5.1-th:reconstruction}
The large-$N$ limit of the correlation functions in~\eqref{F-correlation} exists and defines 
a state functional ${\Omega}$ on the Heisenberg algebra on the circle $\mathcal{C}$ such that
\begin{align}
\nonumber
 \lim_{N\to\infty} F_\beta^N\big(\{\alpha_j\},\{m_j\},\{n_j\}\big) & =
\Omega\Biggl(\prod_{j=1}^r e^{i\alpha_j  \hat{p}_\varphi} e^{i(m_j-n_j)\hat{\varphi}}\Biggr)\\
& =\delta_{m,n}\ e^{i\sum_{j=1}^r\sum_{k=1}^j\alpha_{k}(m_j-n_j)}\ ,
\label{mesoscopic-correlations}
\end{align}
where
$m \equiv \sum_{j=1}^r m_j$ and $n \equiv\sum_{j=1}^r n_j$. 
\end{restatable}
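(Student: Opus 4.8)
The plan is to split the computation into a purely algebraic step, which extracts the phase factor exactly at finite $N$, and an analytic step, which evaluates the residual modulus through a law of large numbers in the mean-field state.

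First I would exploit the Weyl-like relation encoded in \eqref{algebra-N}. Since $\big[p^N,E_\pm^N\big]=\pm E_\pm^N$, one has $e^{i\alpha p^N}E_\pm^N=e^{\pm i\alpha}E_\pm^N e^{i\alpha p^N}$, so each exponential $e^{i\alpha_j p^N}$ can be commuted to the right through every block $(E_-^N)^{n_\ell}(E_+^N)^{m_\ell}$ with $\ell\ge j$, picking up the c-number factor $e^{i\alpha_j\sum_{\ell\ge j}(m_\ell-n_\ell)}$. Once pushed against the cyclic vector on the right, the exponentials act trivially because $p^N\ket{\omeganb}=0$ by \eqref{on-vacuum}. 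Reindexing the resulting double sum then yields, exactly for every $N$,
\[
F_\beta^N\big(\{\alpha_j\},\{m_j\},\{n_j\}\big)= e^{\,i\sum_{j=1}^r\sum_{k=1}^j\alpha_k(m_j-n_j)}\ \Bigexpvalue{\prod_{j=1}^r (E_-^N)^{n_j}(E_+^N)^{m_j}}_\beta^N,
\]
which already reproduces the phase on the right-hand side of \eqref{mesoscopic-correlations}. It remains to show that the residual correlator tends to $\delta_{m,n}$.

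For the residual correlator I would use \eqref{phase}, $E_\pm^N=S_\pm^N\otimes\identity/(\mathfrak{c}N)$, together with the fact that GNS expectations of $X_N\otimes\identity$ reduce to Gibbs expectations of the spin observable $X_N$, to rewrite it as $(\mathfrak{c}N)^{-(m+n)}\langle\prod_j (S_-^N)^{n_j}(S_+^N)^{m_j}\rangle_\beta^N$. I would then apply the gauge-averaging identity \eqref{Thirring} (extended to these rescaled collective observables), replacing the Gibbs average by the $\phi$-average of the mean-field product state $\langle\!\langle\,\cdot\,\rangle\!\rangle_\beta^N$. In that product state the sites are statistically independent, so $S_\pm^N/N=\frac1N\sum_k\sigma_\pm^{(k)}$ is an empirical mean with expectation $\langle\!\langle\sigma_\pm\rangle\!\rangle_\beta^N=\Delta e^{\pm i\phi}$ and variance $O(1/N)$; using $\mathfrak{c}=\Delta$ from \eqref{Thirring-2}, a law of large numbers gives $S_\pm^N/(\mathfrak{c}N)\to e^{\pm i\phi}$. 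Since the limit is a c-number, the ordered product factorizes, so that $\lim_N\langle\!\langle\prod_j (E_-^N)^{n_j}(E_+^N)^{m_j}\rangle\!\rangle_\beta^N=\prod_j e^{i(m_j-n_j)\phi}=e^{i(m-n)\phi}$. Integrating over the gap phase, $\frac1{2\pi}\int_0^{2\pi}e^{i(m-n)\phi}\,d\phi=\delta_{m,n}$, which is the announced modulus. Finally, identifying $e^{i\alpha p^N}\mapsto e^{i\alpha\hat p_\varphi}$ and $(E_-^N)^{n_j}(E_+^N)^{m_j}\mapsto e^{i(m_j-n_j)\hat\varphi}$ matches the limit with the functional $\Omega(\,\cdot\,)=\langle 0|\,\cdot\,|0\rangle$ on the Heisenberg algebra on the circle $\mathcal{C}$, in view of \eqref{weyl2} and \eqref{scalarproduct}.

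The delicate point, which I expect to require the most care, is the factorization of the ordered product of collective operators in the mean-field state: weak convergence of $S_\pm^N/N$ to a scalar does not by itself control products, and one must instead estimate moments. The mechanism is twofold. The operators do not commute, but each commutator $[S_+^N,S_-^N]=2S_z^N$ lowers the number of collective factors by one and is therefore $O(N^{L-1})$ against the leading $O(N^{L})$ with $L=m+n$, so reordering is harmless in the limit; moreover the nilpotency $(\sigma_\pm^{(k)})^2=0$ kills coincident-index contributions, leaving only the $O(N^{L})$ all-distinct terms that reproduce the product of single-site means. Making this uniform requires $N$-independent bounds on $(\mathfrak{c}N)^{-(m+n)}\bigl|\langle\!\langle\prod_j(S_-^N)^{n_j}(S_+^N)^{m_j}\rangle\!\rangle_\beta^N\bigr|$, valid for all $\phi$, which also justify exchanging $\lim_N$ with the $\phi$-integral by dominated convergence. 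I would relegate these moment estimates to an appendix.
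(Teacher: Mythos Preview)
Your algebraic step---commuting each $e^{i\alpha_j p^N}$ to the right through the $E_\pm^N$ blocks and then annihilating it on $\ket{\omeganb}$---is exactly what the paper does and cleanly isolates the phase factor.

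For the residual correlator your route diverges from the paper's in one respect. You feed the whole expression through the mean-field gauge average \eqref{Thirring}, obtain $e^{i(m-n)\phi}$ pointwise in $\phi$, and then integrate over $\phi$ to produce $\delta_{m,n}$. The paper instead splits the two cases. For $m\ne n$ it argues directly in the GNS representation: acting with $(S_-^N)^n(S_+^N)^m$ on the first tensor factor of $\ket{\omeganb}$ shifts the $s_z$ label by $m-n$, so the resulting vector is orthogonal to $\ket{\omeganb}$ by the very structure of \eqref{weights}. Hence $\bigexpvalue{(E_-^N)^n(E_+^N)^m}_\beta^N=0$ \emph{exactly at every finite $N$}, with no limit, no mean-field replacement, and no dominated-convergence argument required; this is just the $U(1)$ invariance $[H^N,S_z^N]=0$ of the strong-coupling Hamiltonian, read off in the purified state. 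Only for $m=n$ does the paper invoke \eqref{Thirring} and the distinct-site expansion, and there your argument and the paper's coincide, including the combinatorics of coincident-index terms being suppressed by $1/N$. The reordering of the interleaved $E_\pm^N$ factors is handled in the paper by a preliminary norm estimate $\bignorm{[(E_-^N)^a,(E_+^N)^b]}=O(1/N)$, which is the same content as your last paragraph, just packaged as a separate corollary before the main proof.

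So your proposal is correct, but the $m\ne n$ treatment is unnecessarily indirect: you rely on the phase integral to kill $e^{i(m-n)\phi}$, which obliges you to justify exchanging $\lim_N$ with $\int d\phi$ and to control the mean-field moments uniformly in $\phi$, whereas the paper gets zero for free from orthogonality. Adopting that observation confines the moment and dominated-convergence work you flag to the single $m=n$ normalization.
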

\vskip -.3cm
\noindent
The proof of this result can be found in Appendix A.

The Heisenberg algebra on the circle can be represented on the Hilbert space $L^2\bigl([0,2\pi]\bigr)$ as discussed  at the end of Section~\ref{sec:chargequbit}.
One thus finds that the Hilbert space scalar product~\eqref{scalarproduct}  is precisely what one obtains from (\ref{mesoscopic-correlations}) by setting $\alpha_j=0$, $\forall j$:
$$
\lim_{N\to\infty} \Bigexpvalue{\prod_j (E_-^N)^{n_j} (E_+^N)^{m_j}}_\beta^N=\Omega\bigl(e^{i(m-n)\hat{\varphi}}\bigr)=\delta_{m,n}.
$$
Notice that $p^N\ket{\Omega}_\beta^N=0$ yields $F_\beta^N\big(\{\alpha_j\},\{m_j=0\},\{n_j=0\}\big)=1$; therefore, the state functional $\Omega$ simply corresponds to taking expectation values with respect to the vacuum $|0\rangle$,
and we can simply write: $\Omega(\,\cdot\,) = \braket{0 | \,\cdot\, | 0}$.
In addition, 
notice that in the large-$N$ limit the ordering of the powers of $E_\pm^N$ is irrelevant
\begin{equation*}
\lim_{N\to\infty} \Bigexpvalue{\prod_j (E_-^N)^{n_j} (E_+^N)^{m_j}}^N_\beta =
\lim_{N\to\infty} \bigexpvalue{ (E_-^N)^n\, (E_+^N)^m\,  }^N_\beta = \delta_{m,n}\ .
\end{equation*}
Theorem~\ref{ch5.1-th:reconstruction} establishes the precise meaning of the mesoscopic limit 
that leads from the microscopic algebra of finite size spin 
observables to the mesoscopic algebra on the circle. In other terms, in the  large-$N$ limit, 
the correlation functions involving $e^{i\alpha p^N}$ and $E^N_\pm$ 
can be represented as expectations of $e^{i\alpha \hat{p}_\varphi}$ and $e^{im\hat{\varphi}}$, 
with respect to the pure state $\ket{0}\in L^2([0,2\pi])$.

\begin{definition}[Mesoscopic limit]\label{ch5.1-def:mesoscopic-lim}
Let $\{X^{(N)}, N\in\N\}$ be a sequence of observables of the quasi-spin system, 
each one given by a linear combination of powers of $e^{i  p^N}$ and $E_\pm^N$. 
We say that it converges in the \emph{mesoscopic limit} to $X\in\mathcal{C}$ if, for all $n,\,n',\,m,\,m'\in\N$,
\begin{equation}
\label{ch5.1-eq:def-fluc-lim}
\lim_{N\to\infty} 
\bigexpvalue{(E_-^N)^{n}(E_+^N)^{n'} X^{(N)} (E_+^N)^{m} (E_-^N)^{m'}}^N_\beta
 =\Omega(e^{-i(n-n')\hat{\varphi}}\,X\,e^{i(m-m')\hat{\varphi}})\ ,
\end{equation}
and simply write
\begin{equation}
\mlim_{N\to\infty} X^{(N)} = X.
\end{equation}
\end{definition}

\begin{remark}
\label{rem:corr2}
The left hand side of~\eqref{ch5.1-eq:def-fluc-lim} has always the form of a linear combination of correlation functions such as~\eqref{F-correlation}, being $X^{(N)}$ a linear combination of powers of $e^{i p^N}$, $E_\pm^N$.
The right hand side is instead the matrix element of $X$ with respect to the vectors $\ket{n}$ and $\ket{m}$, 
with generic $n,\ m$, thus allowing the complete reconstruction of the operator $X$.
\end{remark}

\section{Dynamics of fluctuations}
\label{ch5.1-dyn}

We now turn our attention to the emerging dynamics of the collective fluctuations at the mesoscopic scale. 
We will first study the fate of the microscopic Hamiltonian of a single superconductor as $N$ becomes large,
and the dynamics it induces on the Heisenberg algebra on the circle. Then we shall focus on a system 
made of two superconductors, coupled through a suitable tunnelling term, in order to properly model the dynamics of
a Josephson junction.

\subsection{Single superconductor}

For sake of generality, we shall allow for the presence of a non-vanishing chemical potential by adding 
a term proportional to $S^N_z$ to the strong-coupling BCS Hamiltonian (\ref{BCS-spin-2}):
\begin{equation}
\mathcal{H}^N \equiv H^N +\mu\sum_{k=1}^N \sigma_z^{(k)}=-2(\epsilon-\mu) S^N_z-\frac{2 T_c}{ N}S^N_+S^N_-\ ,
\end{equation}
so that for every Cooper pair added to the system, there is an additional energy shift $-2\mu$.
As $[H^N\,,\,S^N_z]=0$, the spectrum of $\mathcal{H}^N$ is simply a shift of that of $H^N$ given in (\ref{spectrum}):
\begin{equation*}
\mathcal{H}^N\ket{s,s_z} = \big(\eta^N(s,s_z)+2\mu s_z\big)\ket{s,s_z}\ .
\end{equation*}

As explained in Section 4, the generator of the time evolution in the GNS representation is 
not simply given by $\mathcal{H}^N\otimes\identity$, but by its renormalization through the subtraction 
of a contribution from the commutant, 
$\mathcal{H}^N\otimes\identity - \identity\otimes\mathcal{H}^N=H^N\otimes\identity - \identity\otimes H^N +2\mu p^N$;
this redefinition does not affect the dynamics of any quasi-spin observable, as they are of the generic form $X\otimes\identity$.

At finite $N$, the time-evolution operator is then given by
\begin{equation}
\mathcal{U}^N(t)\equiv e^{-it(\mathcal{H}^N\otimes\identity-\identity\otimes \mathcal{H}^N)} =
e^{-it({H}^N\otimes\identity-\identity\otimes {H}^N)}\, e^{-it\, (2\mu  p^N)}\ ,
\label{time-evolution}
\end{equation}
and has the advantage of leaving invariant the GNS state vector:
\begin{equation}
\mathcal{U}^N(t)\ket{\omeganb} = \ket{\omeganb}\ .
\end{equation}
In the large-$N$ limit however, only the last exponential in (\ref{time-evolution}) survives,
as it turns out that the strong-coupling BCS Hamiltonian $H^N$ does not contribute to the time evolution
of fluctuation operators. Indeed, 
\begin{restatable}[Mesoscopic dynamics]{theorem}{dynamics}
\label{ch5.1-th:dyn}
The evolution operator \eqref{time-evolution} converges in the mesoscopic limit to
\begin{equation}
\mlim_{N\to\infty} \mathcal{U}^N(t) = \mathcal{U}(t) \equiv e^{-it\, \mathcal{H} } = e^{-it (2\mu  \hat{p}_\varphi)}\ .
\end{equation}
\end{restatable}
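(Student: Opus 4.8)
The plan is to verify the stated convergence directly against Definition~\ref{ch5.1-def:mesoscopic-lim}: I must show that for all $n,n',m,m'\in\N$ the correlation function $\bigexpvalue{(E_-^N)^{n}(E_+^N)^{n'}\,\mathcal{U}^N(t)\,(E_+^N)^{m}(E_-^N)^{m'}}^N_\beta$ converges to $\Omega\bigl(e^{-i(n-n')\hat\varphi}\,e^{-2i\mu t\,\hat p_\varphi}\,e^{i(m-m')\hat\varphi}\bigr)$. I would start from the factorization in~\eqref{time-evolution}, $\mathcal{U}^N(t)=e^{-itK^N}\,e^{-2i\mu t\,p^N}$ with $K^N\equiv H^N\otimes\identity-\identity\otimes H^N$, and dispose of the momentum factor first. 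From $p^N\ket{\omeganb}=0$ together with the relations~\eqref{algebra-N}, the vector $(E_+^N)^{m}(E_-^N)^{m'}\ket{\omeganb}$ is an eigenvector of $p^N$ with eigenvalue $m-m'$; hence $e^{-2i\mu t\,p^N}$ acts on it as the scalar $e^{-2i\mu t(m-m')}$, which is exactly the momentum content carried by $e^{-2i\mu t\,\hat p_\varphi}$ on the right-hand side (recall $\hat p_\varphi\ket{k}=k\ket{k}$ and $e^{ik\hat\varphi}\ket{0}=\ket{k}$).

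It then remains to show that $e^{-itK^N}$ contributes trivially. Since $K^N\ket{\omeganb}=0$ by the purification/KMS invariance established in Section~4, I would pass to the Heisenberg picture and set $E_\pm^N(t)\equiv e^{-itK^N}E_\pm^N\,e^{itK^N}$; because $E_\pm^N=(S_\pm^N\otimes\identity)/(\mathfrak{c}N)$ acts on the left tensor factor only, the commutant exponential $e^{it\,\identity\otimes H^N}$ commutes with it and cancels, leaving $E_\pm^N(t)=\bigl(e^{-itH^N}S_\pm^N e^{itH^N}\bigr)\otimes\identity/(\mathfrak{c}N)$. The whole claim thus reduces to proving that $E_\pm^N(t)\to E_\pm^N$ in the mesoscopic limit, i.e.\ that the strong-coupling Hamiltonian $H^N$ induces no net evolution of the fluctuations; granting this, multilinearity of the correlation functions together with Theorem~\ref{ch5.1-th:reconstruction} gives the limit $e^{-2i\mu t(m-m')}\,\delta_{n-n',\,m-m'}$, which coincides with $\Omega\bigl(e^{-i(n-n')\hat\varphi}e^{-2i\mu t\hat p_\varphi}e^{i(m-m')\hat\varphi}\bigr)$.

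The key simplification is that $S_z^N$ is conserved, $[H^N,S_z^N]=0$ (indeed $S_+^NS_-^N=(\vec{S}^N)^2-(S_z^N)^2+S_z^N$ commutes with $S_z^N$), so the Heisenberg equation of motion integrates in closed form. Using $[H^N,S_+^N]=S_+^N\bigl(-2\epsilon+\tfrac{4T_c}{N}S_z^N\bigr)$ and the conservation of $S_z^N$, one obtains
\[
\frac{d}{dt}\bigl(e^{-itH^N}S_+^N e^{itH^N}\bigr)=-i\,\bigl(e^{-itH^N}S_+^N e^{itH^N}\bigr)\Bigl(-2\epsilon+\tfrac{4T_c}{N}S_z^N\Bigr),
\]
whence
\[
e^{-itH^N}S_+^N e^{itH^N}=e^{2i\epsilon t}\,S_+^N\,e^{-i\frac{4T_c t}{N}S_z^N},
\qquad
E_+^N(t)=e^{2i\epsilon t}\,E_+^N\,\bigl(e^{-i\frac{4T_c t}{N}S_z^N}\otimes\identity\bigr),
\]
with the analogous expression for $E_-^N(t)$ (the $S_z^N$-exponential sitting on the opposite side and carrying the opposite phase).

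The cancellation then comes from the mean-field behaviour of the intensive operator $S_z^N/N$, which concentrates on the c-number $\bar s_z=\epsilon/(2T_c)$; crucially this value is fixed by the gap consistency condition~\eqref{consistency}, since the mean-field single-site expectation of $\sigma_z$ equals $\tfrac{\epsilon}{\omega}\tanh(\beta\omega)=\epsilon\,\beta_c=\epsilon/T_c$. Consequently $e^{-i\frac{4T_c t}{N}S_z^N}\to e^{-i4T_c t\,\bar s_z}=e^{-2i\epsilon t}$, exactly cancelling the prefactor $e^{2i\epsilon t}$ in $E_+^N(t)$ and giving $E_\pm^N(t)\to E_\pm^N$. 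I expect the main obstacle to be making this last replacement rigorous \emph{inside} the correlation functions: one must collect the $m+m'$ separate $S_z^N$-exponentials produced by the Heisenberg factors --- commuting them through the remaining $E_\pm^N$ at the cost of phases $e^{\mp i4T_c t/N}\to1$ --- into a single factor $e^{ic\,S_z^N/N}$, and then show that its insertion factorizes as the c-number $e^{ic\bar s_z}$ times the unperturbed correlation function. This is a law-of-large-numbers/concentration statement for $S_z^N/N$ on the sector of finitely many excitations above the thermal vacuum $\ket{\omeganb}$: the mean converges to $\bar s_z$ while the fluctuations of $S_z^N/N$, of order $N^{-1/2}$, wash out in the limit, so that the intensive observable acts as a scalar. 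Establishing this concentration --- presumably along the same lines used to prove Theorem~\ref{ch5.1-th:reconstruction} --- is the crux of the argument.
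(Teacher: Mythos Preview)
Your proposal is correct and follows essentially the same route as the paper: split off the $e^{-2i\mu t\,p^N}$ factor via the eigenvalue relation, pass to the Heisenberg picture and use the exact identity $E_+^N(t)=E_+^N\,W^N(t)$ with $W^N(t)=\exp\bigl(-it(-2\epsilon+cT_c\,S_z^N/N)\bigr)$, then invoke concentration of $S_z^N/N$ to kill $W^N(t)$. The paper packages the step you flag as the ``main obstacle'' via the single norm estimate $\bigl\|\bigl((W^N(t))^m-1\bigr)\ket{\omeganb}\bigr\|^2=2\bigl(1-\Re\langle(W^N(t))^m\rangle^N_\beta\bigr)\to0$ combined with Cauchy--Schwarz (Corollary~\ref{app0-corollary:useful-limits}), which is exactly the concentration argument you anticipate.
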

\noindent
The Hamiltonian $\mathcal{H}$ is the generator of time-translations over the circle (for the proof, see the Appendix A).

\subsection{Josephson junction}

As a Josephson junction involves two independent superconducting layers, in order to cope with this new physical situation
we need to extend the previous description. In practice, we have to ``double'' the construction so far used and
in doing so we shall use the label $L$ and $R$ to distinguish quantities 
referring to either one of the two superconductors.
The collective spin operators, now acting on the Hilbert space $\mathbb{C}^{2^N}\otimes\mathbb{C}^{2^N}$, 
will be denoted by $\mathcal{S}^N_i$ and $\mathcal{T}^N_i$, $i=x,y,z$, for the $L$ and $R$ electrode, respectively:
\begin{eqnarray}
\label{spin-two-1}
&& \mathcal{S}^N_i\equiv  S^N_i \otimes\identity_R\ ,\qquad \mathcal{S}^N_\pm\defsym \mathcal{S}^N_x\pm i\mathcal{S}^N_y\ ,\\
\label{spin-two-2}
&& \mathcal{T}^N_i\equiv \identity_L\otimes S^N_i\ , \qquad \mathcal{T}^N_\pm\defsym \mathcal{T}^N_x\pm i\mathcal{T}^N_y\ ,
\end{eqnarray}
with $S^N_i$ as in (\ref{collective-spin}) and $\identity_R$, $\identity_L$ the identity operator in the $R$, $L$
subspaces; these operators separately satisfy the commutations relations of two independent $su(2)$ algebras. 
The common eigenstates of the Casimir $(\vec{\mathcal{S}}^N)^2$, $(\vec{\mathcal{T}}^N)^2$, 
and $\mathcal{S}^N_z$, $\mathcal{T}^N_z$ operators
are tensor products $\ket{s,s_z}\otimes\ket{t,t_z}$ of the two, single $su(2)$ basis vectors representations, 
with $s,t$ ranging from 1 to $N/2$ and $s_z=0,\pm1,\cdots\pm s$, $t_z = 0,\pm1,\cdots\pm t$.

With these quasi-spin operators one construct two strong-coupling BCS Hamiltonians,
\begin{equation}
\label{two-layers}
\mathcal{H}_L^N = -2\epsilon_L \mathcal{S}^N_z-\frac{2\, T_c^L}{N} \mathcal{S}^N_+\mathcal{S}^N_-\ ,\qquad
\mathcal{H}_R^N = -2\epsilon_R \mathcal{T}^N_z-\frac{2\, T_c^R}{N} \mathcal{T}^N_+\mathcal{T}^N_-\ ,
\end{equation}
where, for generality, we assumed different energy parameters $\epsilon_L$, $\epsilon_R$, and critical temperatures
$T_c^L$, $T_c^R$ in the two layers.

Let us indicate with $\rho_\beta^N$ the Gibbs state associated with 
the total system Hamiltonian $\mathcal{H}_L^N + \mathcal{H}_R^N$,
assuming no interactions between the two layers and a common inverse temperature $\beta$; it reduces to the tensor product
of the corresponding Gibbs states pertaining to the two $L$- and $R$-subsystem,
$\rho_\beta^N=\rho_{\beta,L}^N \otimes \rho_{\beta,R}^N$,
so that, given two observables $X^N_L$, $X^N_R$, 
acting separately on the two layers, their thermal expectation values factorizes:
\begin{equation}
{\rm Tr}\big[ \rho_\beta^N\, X^N_L \otimes X^N_R\big]= {\rm Tr}\big[ \rho_{\beta,L}^N\, X^N_L \big]\
{\rm Tr}\big[ \rho_{\beta,R}^N\, X^N_R \big]\ .
\end{equation}
The corresponding GNS cyclic vector $\ket{\omeganb}$ will then be given by 
\begin{align}
\nonumber
\ket{\omeganb} &=
\sum_{s,s_z}\sum_{t,t_z}\sqrt{\rho^N_\beta(s,s_z;t,t_z)}\bigoplus_{\alpha_L=1}^{d(s)}
\bigoplus_{\alpha_R=1}^{d(t)}\bigl(\ket{s,s_z}_{\alpha_L}\otimes\ket{s,s_z}_{\alpha_L}\bigr)\\
    \label{GNS-vacuum}
&\hskip 7cm \otimes\bigl(\ket{t,t_z}_{\alpha_R}\otimes\ket{t,t_z}_{\alpha_R}\bigr)\ ,
\end{align}
with $\rho^N_\beta(s,s_z;t,t_z) = \rho^N_{\beta,L}(s,s_z)\, \rho^N_{\beta,R}(t,t_z)$,
where the r.h.s. quantities are defined as in (\ref{BoltzmannWeightsSC}) for both the $L$ and $R$ part,
while $d(s)$, $d(t)$ take into account as before the multiplicity of the various $su(2)$ spin-representations.

Having to deal with two superconducting layers, the fluctuation operators in the GNS representation will
be obtained generalizing the construction of Section 5 relative to a single superconductor. 
Recalling (\ref{momentum}) and (\ref{phase}), we shall then use the following definition 
for the momentum and its conjugate phase operators in the $L$ and $R$ layers:
\begin{eqnarray}
\label{fluctuations-def-1}
&& p_L^N = \Bigl(S^N_z\otimes\identity_L - \identity_L\otimes S^N_z\Bigr)\otimes\identity_R\otimes\identity_R\ ,\\
\label{fluctuations-def-3}
&& (E^N_{L,\pm})^m = \left(\frac{S^N_\pm}{\mathfrak{c}_L N}\right)^m\! \otimes\identity_L\otimes\identity_R\otimes\identity_R\ , \\
&& p_R^N = \identity_L\otimes\identity_L\otimes\Bigl(S^N_z\otimes\identity_R - \identity_R\otimes S^N_z\Bigr),\\
\label{fluctuations-def-2}
&& (E^N_{R,\pm})^m = \identity_L\otimes\identity_L\otimes\left(\frac{S^N_\pm}{\mathfrak{c}_R N}\right)^m\!\otimes\identity_R\ ,
\qquad m\in\mathbb{N}\ ,
\end{eqnarray}
with $S^N_i$, $i=z,\pm$, as in (\ref{collective-spin}), and  $\mathfrak{c}_L$, $\mathfrak{c}_R$
connected to the ``gaps'' in the two layers as in (\ref{Thirring-2}).
Since the Hilbert space for the system of two layers is 
$\mathbb{C}^{2^N}\otimes\mathbb{C}^{2^N}$, its GNS extension will involve the tensor products of four factors. 
The choice of the ordering of these four spaces is arbitrary, as all these possible choices are unitarily related; 
the ordering adopted in 
\hbox{(\ref{fluctuations-def-1})-(\ref{fluctuations-def-2})} is the most convenient for what follows.

One can then check that the analogue of the algebraic relation \eqref{algebra-N} 
hold for the two separate subsystems:
\begin{eqnarray}
\label{L-algebra}
 && \Big[p_L^N,\, E^N_{L,\pm}\Big] = \pm E^N_{L,\pm}\ ,\qquad \Big[p_R^N,\, E^N_{R,\pm}\Big] =\pm E^N_{R,\pm}.
\end{eqnarray}
Since the operators of one quasi-spin system commute with those of the other one, 
and the GNS state $\ket{\omeganb}$ in (\ref{GNS-vacuum}) does not carry correlations between the two quasi-spin systems, 
the whole analysis of Section 5 can be carried over to the new setting. 
The new target algebra of fluctuations is the tensor product of two commuting algebras 
on the circle, $\mathcal{C}^L\otimes\mathcal{C}^R$. 
With obvious adaptation and extension of the notation in~\eqref{F-correlation}, correlation functions as
\begin{eqnarray}
\nonumber
&& \mathcal{F}_\beta^N\big(\{\alpha\},\{m\},\{n\}; \{\alpha'\},\{m'\},\{n'\}\big)\\
\nonumber
&& \hskip 2cm
\equiv \Bigexpvalue{
        \prod_{j=1}^r e^{i\alpha_j p_L^N} (E^N_{L,-})^{n_j} (E^N_{L,+})^{m_j} 
        \prod_{k=1}^{r'} e^{i\alpha'_k p_R^N} (E^N_{R,-})^{n'_{k}} (E^N_{R,+})^{m'_{k}}}^N_\beta\ ,
\end{eqnarray}
factorize into products of correlation functions of the subsystems:
\begin{eqnarray*}
&& \mathcal{F}_\beta^N\big(\{\alpha\},\{m\},\{n\}; \{\alpha'\},\{m'\},\{n'\}\big) = 
\Bigexpvalue{\prod_{j=1}^r e^{i\alpha_j p_L^N} (E^N_{L,-})^{n_j} (E^N_{L,+})^{m_j}}^N_\beta\\
&&\hskip 8cm
       \times\ \Bigexpvalue{ \prod_{k=1}^{r'} e^{i\alpha'_k p_R^N} (E^N_{R,-})^{n'_{k}} (E^N_{R,+})^{m'_{k}}}^N_\beta\ .
\end{eqnarray*}
Taking the large-$N$ limit, we can apply Theorem~\ref{ch5.1-th:reconstruction} to both factors. 
The above function is then reconstructed as a correlation function on the tensor product of two 
commuting algebra on the circle, $\mathcal{C}_L\otimes\mathcal{C}_R$, with respect to a factorized state:
\begin{eqnarray}
\nonumber
&&\lim_{N\to\infty} \mathcal{F}_\beta^N\big(\{\alpha\},\{m\},\{n\}; \{\alpha'\},\{m'\},\{n'\}\big)\\
\nonumber
&& \hskip 3cm =\Omega\Bigl(\prod_{j=1}^r e^{i\alpha_j \hat{p}_{\varphi_L}} e^{i(m_j-n_j)\hat{\varphi}_L}\prod_{k=1}^{r'} e^{i\alpha_k' \hat{p}_{\varphi_R}} 
e^{i(m_k'-n_k')\hat{\varphi}_R}\Bigr) \\
\nonumber
&& \hskip 3cm = {\Omega}_L\Bigl(\prod_{j=1}^r e^{i\alpha_j \hat{p}_{\varphi_L}} e^{i(m_j-n_j)\hat{\varphi}_L}\Bigr)\
	{\Omega}_R \Bigl(\prod_{k=1}^{r'} e^{i\alpha_k' \hat{p}_{\varphi_R}} e^{i(m_k'-n_k')\hat{\varphi}_R}\Bigr)\\
\nonumber
&& \hskip 3cm = (e^{i\chi}\, \delta_{m,n})\,(e^{i\chi'}\,\delta_{m',n'})\ ,
\end{eqnarray}
where the phase factors $\chi$ and $\chi'$ are given by
\begin{equation*}
	\chi = \exp{\Bigl(i\sum_{j=1}^r\sum_{k=1}^j\alpha_{k}(m_j-n_j)\Bigr)},\quad \chi' = \exp{\Bigl(i\sum_{k=1}^{r'}\sum_{k=1}^j\alpha'_{k}(m'_j-n'_j)\Bigr)}\ ,
\end{equation*}
and $m =\sum_{j=1}^r m_j$, $n =\sum_{j=1}^r n_j$, $m' = \sum_{j=1}^{r'} m'_j$, $n' =\sum_{j=1}^{r'} n'_j$.
We can represent $\mathcal{C}_L\otimes\mathcal{C}_R$ onto $L^2\bigl([0,2\pi]\bigr)\otimes L^2\bigl([0,2\pi]\bigr)$, 
with the state $\Omega$ being the expectation on the vacuum state, realized by the constant function:
\begin{equation}
\label{circle-vacuum}
\braket{\vec\theta|0}=\braket{\theta_L |0 }_L\, \braket{\theta_R | 0}_R = \frac{1}{2\pi}\ ,
\end{equation}
where $\vec\theta=(\theta_L,\, \theta_R)$.
In the following, for sake of compactness, we shall always use the notation
$ \expvalue{\,\cdot\,} = \braket{0 | \, \cdot\, |0}$ in order to indicate the expectation on $\mathcal{C}_L\otimes\mathcal{C}_R$.

Finally, the definition of the mesoscopic limit of junction observables, 
analogue to Definition~\ref{ch5.1-def:mesoscopic-lim} for the single BCS superconductor, reads as follows:

\begin{definition}[Mesoscopic limit]
\label{def:mesoscopic-lim-double}
Let $\{X^{(N)},\ N\in\N\}$ be a sequence of observables of the double quasi-spin system, 
each one given by a linear combination of powers of $e^{i  p_L^N}$, $E^N_{L,\pm}$ and
$e^{i  p_R^N}$, $E^N_{R,\pm}$. For any $m_L,\, m_L',\, m_R,\, m_R' \in\N$, let us define the operator
\begin{equation}
G^N(m_L,m_L',m_R,m_R') = (E^N_{L,-})^{m_L} (E^N_{L,+})^{m_L'} (E^N_{R,-})^{m_R} (E^N_{R,+})^{m_R'} \ .
\end{equation}
Then we say that the sequence converges in the \emph{mesoscopic limit} to $X\in\mathcal{C}_L\otimes\mathcal{C}_R$,
and write $\mlim_{N\to\infty} X^{(N)} = X$ if
\begin{multline}
\lim_{N\to\infty} \Bigexpvalue {\bigl(G^N(n_L,n_L',n_R,n_R')\bigr)\dag \ X^{(N)}\ G^N(m_L,m_L',m_R,m_R')}^N_\beta = \\ 
= \bigexpvalue{e^{-i(n_L-n_L')\hat{\varphi}_L}e^{-i(n_R-n_R')\hat{\varphi}_R}\ X\ e^{i(m_R-m_R')\hat{\varphi}_R} e^{i(m_L-m_L')\hat{\varphi}_L}},
\end{multline}
for any $n_L,\, n_L',\, n_R,\, n_R',\, m_L,\, m_L',\, m_R,\, m_R' \in\N$.
\end{definition}

\subsection{Charge qubit}

Using the techniques and definitions introduced above, we can now discuss the mesoscopic dynamics of the
quantum circuit relative to the charge qubit, consisting of a Josephson junction together with an external 
voltage and a capacitor as discussed in Section 2.

The sum of the two Hamiltonians in (\ref{two-layers}) generates the dynamics of two independent superconducting layers.
However, in order to describe the tunneling of Cooper pairs in a Josephson junction, an interaction between the
two layers is needed. Such a coupling can be conveniently described by a bilinear Hamiltonian written in terms
of the collective spin operators $\mathcal{S}_i$ and $\mathcal{T}_i$ introduced in (\ref{spin-two-1}),
(\ref{spin-two-2}) as:
\begin{equation}
\nonumber
\widehat H^N =  \frac{\lambda}{N^2}(\mathcal{S}^N_+\mathcal{T}^N_-+\mathcal{S}^N_-\mathcal{T}^N_+)\ ,
\end{equation}
where $\lambda$ is a suitable coupling constant and the scaling $N^{-2}$ takes into account 
that the passage of Cooper pairs in the junction is a surface effect.

As we shall be working in the GNS representation, we need to extend both the ``free'' double-layer Hamiltonian,
$H_L^N + H_R^N$, and the tunneling interaction $\widehat H^N$ above, as operators acting
on the Hilbert space obtained by doubling $\mathbb{C}^{2^N}\otimes\mathbb{C}^{2^N}$,
the two layers Hilbert space. Using the same ordering of the four tensor factors in this extended space
already introduced in the definitions 
\hbox{(\ref{fluctuations-def-1})-(\ref{fluctuations-def-2})}, one can conveniently
define the ``free'' Hamiltonian as:
\begin{eqnarray}
\nonumber
&& H^N_{\mathrm{free}} \equiv \left(H^N_L\otimes\identity_L-\identity_L\otimes H^N_L\right)\otimes\identity_R\otimes\identity_R\\  
&& \hskip 3cm+\identity_L\otimes\identity_L\otimes\left(H^N_R\otimes\identity_R-\identity_R\otimes H^N_R\right)\ ,
\label{H-free}
\end{eqnarray}
where $H^N_L$, $H^N_R$ are exactly as in (\ref{BCS-spin-2}), but with the parameters $\varepsilon$ and $T_c$
replaced by $\varepsilon_L$, $T_c^L$ and $\varepsilon_R$, $T_c^R$, respectively; one easily checks that $H^N_{\mathrm{free}}$
annihilates the GNS vacuum state $\ket{\omeganb}$ given in (\ref{GNS-vacuum}). Similarly, the tunnelling Hamiltonian
$\widehat H^N$ can be simply rewritten in the GNS representation in terms of the fluctuation operators
introduced in (\ref{fluctuations-def-3}) and (\ref{fluctuations-def-2}) as:
\begin{equation}
H_\mathrm{int}^N \equiv \lambda \mathfrak{c}_L \mathfrak{c}_R \Big(E^N_{L,+}\, E^N_{R,-} + E^N_{L,-}\, E^N_{R,+}\Big)\ .
\label{interaction}
\end{equation}

In order to describe the dynamics of a charge qubit, we need to add a capacitive term
to the above Hamiltonian pieces, keeping track of the excess charge induced by to the tunneling
of Cooper pairs and the presence of an external gate voltage. 
Recall that excitations above the thermal state are created 
by acting with the fluctuation operators $E^N_{L, \pm}$ and $E^N_{R,\pm}$ on the the GNS vacuum $\ket{\omeganb}$, the ones with positive sign adding a positive charge $2e$ on the respective superconducting layer (thus removing a Copper pair), 
while the ones with negative sign adding a negative charge $-2e$ (thus creating a Cooper pair). 
Thanks to the commutators in (\ref{L-algebra}),
the number of such excitations in the two layers is counted by the conjugate momenta
$p_L^N$ and $p_R^N$, so that multiplying them by the Cooper pair charge yields the charge operators on the electrodes.
The charge excess between the two layers can then be expressed by their semidifference.
Recalling that a capacitive energy is quadratic in the excess charge,
the capacitive Hamiltonian term in the GNS representation can be conveniently written as
\begin{equation}
H^N_C = \mathcal{E}_C \left(\frac{p^N_L-p^N_R}{2}-n_g\right)^2\ ,
\label{H-capacity}
\end{equation}
where $\mathcal{E}_C$ represents the charging energy, while the constant parameter $n_g$ gives the excess charge 
induced by an external gate voltage as in (\ref{charge-qubit}).

At finite $N$, the Hamiltonian describing the charge qubit circuit can then be expressed as:
\begin{equation}
\label{charge-qubit-hamiltonian}
\mathfrak{H}^N \equiv H^N_\mathrm{free} + H^N_\mathrm{int} + H^N_C \ ,
\end{equation}
so that the corresponding time evolution operator is given by
\begin{equation}
\label{evolution}
U^N(t) = e^{-it\, \mathfrak{H}^N}.
\end{equation}
As already anticipated by Theorem~\ref{ch5.1-th:dyn} for a single superconductor,
only the last two terms in (\ref{charge-qubit-hamiltonian}) induce a nontrivial mesoscopic dynamics.
Indeed, we have

\begin{restatable}[Charge qubit dynamics]{theorem}{dynamicsqubit}
\label{ch5.2-th:dyn}
The microscopic time-evolution operator $U^N(t)$ has a well-defined mesoscopic limit,
\begin{equation}
\label{mesoscopic-dynamics}
\mlim_{N\to\infty} U^N(t) = U(t) \equiv e^{-it\, \mathfrak{H}}\ ,
\end{equation}
where the Hamiltonian generating the mesoscopic dynamics on the circle is given by
\begin{equation}
\label{mesoscopic-hamiltonian-1}
\mathfrak{H} = \mathcal{E}_C\left(\frac{\hat{p}_{\varphi_L}-\hat{p}_{\varphi_R}}{2}-n_g\right)^2
+2\lambda \mathfrak{c}_L \mathfrak{c}_R\, \cos\big(\hat{\varphi}_L-\hat{\varphi}_R\big) \ .
\end{equation}
\end{restatable}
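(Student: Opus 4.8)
The plan is to establish the convergence directly at the level of the correlation functions of Definition~\ref{def:mesoscopic-lim-double}, reducing the statement to the limits of matrix elements of $U^N(t)$ between vectors $G^N\ket{\omeganb}$. Two structural facts drive the argument. First, the strong-coupling BCS Hamiltonian commutes with $S^N_z$ in each layer, so $H^N_\mathrm{free}$ commutes with the momenta $p^N_L,p^N_R$ and hence with the capacitive term $H^N_C$ of \eqref{H-capacity}; second, $H^N_\mathrm{free}$ annihilates the GNS vacuum \eqref{GNS-vacuum}. I would therefore pass to the interaction picture with respect to the commuting ``free'' part $H^N_0\equiv H^N_\mathrm{free}+H^N_C$, treating only the tunnelling term $H^N_\mathrm{int}$ of \eqref{interaction} as a perturbation:
\begin{equation*}
U^N(t)=e^{-itH^N_\mathrm{free}}\,e^{-itH^N_C}\,\mathcal{T}\exp\Bigl(-i\int_0^t \hat H^N_\mathrm{int}(s)\,ds\Bigr),\qquad \hat H^N_\mathrm{int}(s)=e^{isH^N_0}H^N_\mathrm{int}\,e^{-isH^N_0}.
\end{equation*}
Using $\bra{\omeganb}e^{-itH^N_\mathrm{free}}=\bra{\omeganb}$ together with $(G^N)^\dagger e^{-itH^N_\mathrm{free}}=e^{-itH^N_\mathrm{free}}\bigl(e^{itH^N_\mathrm{free}}G^N e^{-itH^N_\mathrm{free}}\bigr)^\dagger$, the leftmost exponential can be pushed onto the bra vacuum and removed, at the cost of free-evolving the operators standing to its left.

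The crux is then the lemma that the free BCS evolution is asymptotically trivial on the fluctuations, namely $\mlim_{N\to\infty}e^{isH^N_\mathrm{free}}E^N_{L,\pm}e^{-isH^N_\mathrm{free}}=e^{\pm i\hat\varphi_L}$ and likewise for the $R$ layer. This is precisely the content of Theorem~\ref{ch5.1-th:dyn} at vanishing chemical potential $\mu=0$: the effective precession frequency $-2\epsilon+4T_c\,s_z$, with $s_z=\lim_{N\to\infty}\expvalue{S^N_z}^N_\beta/N$, vanishes identically once the self-consistency relation \eqref{consistency} is imposed, so that $H^N_\mathrm{free}$ does not rotate the phase operators. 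Since $H^N_\mathrm{free}$ splits into two commuting single-layer generators and the vacuum \eqref{GNS-vacuum} carries no $L$--$R$ correlations, the triviality factorizes and extends to the products entering $H^N_\mathrm{int}$. On the other hand, $e^{-itH^N_C}$ and the conjugation by $e^{isH^N_C}$ act diagonally on the excitation-number eigenvectors $G^N\ket{\omeganb}$ --- which, by \eqref{L-algebra}, are eigenvectors of $p^N_L,p^N_R$ with integer eigenvalues --- with eigenvalues that coincide exactly with the corresponding matrix elements of $H_C$ on the circle. Combining these two facts, each $\hat H^N_\mathrm{int}(s)$ converges mesoscopically to $e^{isH_C}H_\mathrm{int}e^{-isH_C}$, with $H_\mathrm{int}=2\lambda\mathfrak{c}_L\mathfrak{c}_R\cos(\hat\varphi_L-\hat\varphi_R)$; reassembling the interaction-picture formula in reverse then yields $e^{-itH_C}\,\mathcal{T}\exp(-i\int_0^t e^{isH_C}H_\mathrm{int}e^{-isH_C}\,ds)=e^{-it(H_C+H_\mathrm{int})}=e^{-it\mathfrak{H}}$, which is \eqref{mesoscopic-hamiltonian-1}.

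The step I expect to be most delicate is the interchange of the limit $N\to\infty$ with the time-ordered (Dyson) series, together with the term-by-term reconstruction. The choice of interaction picture is designed precisely to keep this under control: conjugation by the unitaries $e^{isH^N_0}$ preserves the norm, so $\|\hat H^N_\mathrm{int}(s)\|=\|H^N_\mathrm{int}\|$ is bounded uniformly in $N$ and $s$ (the operators $E^N_{L,\pm},E^N_{R,\pm}$ being uniformly bounded), whence the Dyson series is dominated by $e^{\,t\|H^N_\mathrm{int}\|}$ and dominated convergence applies; crucially, the unbounded capacitive term never enters as a large operator, appearing only through the diagonal prefactor and the norm-preserving conjugations. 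It then remains to prove that each multi-time term of the expansion --- a correlation function of free-evolved, $H_C$-conjugated products of $E^N_{\bullet,\pm}$ and $p^N_\bullet$ --- converges to its counterpart on $\mathcal{C}_L\otimes\mathcal{C}_R$, which follows from the two-layer factorized form of Theorem~\ref{ch5.1-th:reconstruction} once the free-evolution factors have been removed by the lemma above. The genuine mathematical obstacle is thus the lemma itself, i.e. the asymptotic decoupling of the free BCS generator, resting on the exact cancellation enforced by the gap equation \eqref{consistency} and its stability under the insertion of the bounded tunnelling vertices.
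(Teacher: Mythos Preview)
Your approach is correct and is essentially the paper's own: split $\mathfrak{H}^N=H^N_0+H^N_\mathrm{int}$ with $H^N_0=H^N_\mathrm{free}+H^N_C$, control the Dyson series uniformly in $N$ via the bound $\|H^N_\mathrm{int}\|\le 2\lambda$, and prove term-by-term convergence using the asymptotic triviality of the free BCS evolution on the phase operators together with the diagonal action of $H^N_C$. The only cosmetic difference is the ordering convention --- the paper writes $U^N(t)=\mathcal{D}^N(t)\,U^N_0(t)$ with the free part on the right and handles $U^N_\mathrm{free}(t)$ on the ket side, whereas you push it onto the bra vacuum --- and the lemma you invoke is stated in the paper not as Theorem~\ref{ch5.1-th:dyn} itself but as its extension, Proposition~\ref{app1-prop:E^N_t-vanish}, which gives the needed $\mlim$ of arbitrary products of free-evolved phase operators.
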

\noindent
The proof is provided  in Appendix B.

Introducing the relative coordinate $\hat{\varphi}$ and momentum $\hat{p}_\varphi$ operators,
\begin{equation}\label{ch5.2-eq:relative-coords-def}
e^{i\hat{\varphi}} \equiv e^{i\hat{\varphi}_L}e^{-i\hat{\varphi}_R}\ ,\qquad  \hat{p}_\varphi \equiv \frac{\hat{p}_{\varphi_L}-\hat{p}_{\varphi_R}}{2}\ , 
\qquad [\hat{p}_\varphi,e^{i\hat{\varphi}}] = e^{i\hat{\varphi}}\ ,
\end{equation}
the mesoscopic Hamiltonian can be rewritten as
\begin{equation}
\label{mesoscopic-hamiltonian}
\mathfrak{H} = \mathcal{E}_C\left(\hat{p}_\varphi-n_g\right)^2
+\mathcal{E}_J\, \cos\hat{\varphi} \ ,\qquad \mathcal{E}_J=2\lambda \mathfrak{c}_L \mathfrak{c}_R\ ,
\end{equation}
thus retrieving the formal, phenomenological charge qubit Hamiltonian in (\ref{charge-qubit}). 
However, it should be stressed that $\mathfrak{H}$ in (\ref{mesoscopic-hamiltonian}) is a fully quantum operator
obtained as mesoscopic limit of the microscopic Hamiltonian (\ref{charge-qubit-hamiltonian}):
in contrast to the standard phenomenological
procedure, $e^{\pm i\hat\varphi}$ appearing in (\ref{mesoscopic-hamiltonian}) are Weyl-like operators,
not fixed-phase exponentials.

The Josephson current operator can now be retrieved by computing the time variation of the excess charge
in the junction in units of the Cooper pair charge:
\begin{equation}
\label{current}
J(t)= -\dot{\hat{p}}_\varphi= -i[\mathfrak{H}, \, \hat{p}_\varphi]= \mathcal{E}_J\, \sin\hat\varphi\ .
\end{equation}
Recalling the result (\ref{Thirring-2}), one sees that the critical Josephson current in the strong coupling regime is proportional
to $\mathcal{E}_J$ and therefore to the product of the modulus of the gaps relative to the $L$ and $R$ layers:
\begin{equation}
\label{critical-current}
\mathcal{E}_J = 2\lambda\, \Delta_L\, \Delta_R\ ,
\end{equation}
a result obtained in \cite{Aslamazov} using phenomenological methods (see also \cite{Barone,Bakurskiy}).

\begin{remark}
The above dependence of the critical current differs from the one found in~\cite{Ambegaokar}.
However, the approach there pursued is inapplicable
in the strong-coupling regime, as some finite contributions diverge when the energy levels 
of the electrons close to the Fermi surface are independent from their momenta, $\epsilon_{\vec{k}}=\epsilon$.
Nevertheless, assuming for simplicity identical superconductor layers,
from (\ref{critical-current}) one can recover the result
of \cite{Ambegaokar} in the regime of very small energies, $\varepsilon\ll2T_c\Delta$. 
Indeed, in this case, the consistency condition (\ref{consistency}) reduces to 
\begin{equation}
\label{consistency2}
2\Delta =\tanh(2\beta\Delta/\beta_c)\ ,
\end{equation}
so that, in this regime, the critical current can be rewritten as:
\begin{equation}
\mathcal{E}_J = \frac{\lambda \beta_c}{2}\, \bold{\Delta}  \tanh\left(\frac{\beta\bold{\Delta}}{2}\right)\ ,
\end{equation}
in terms of a suitably rescaled gap $\bold{\Delta}= 4T_c\, \Delta$.
Notice that it is precisely $\bold{\Delta}$ that should be identified with the measured gap,
as it has the dimension of an energy and reproduces the correct phenomenological behaviour~\cite{Rickayzen}-\cite{Abrikosov}.
Indeed, for small temperatures, $\beta\to\infty$, the r.h.s. of (\ref{consistency2}) approaches one,
so that $\bold{\Delta}(0)\simeq 2 T_c$. On the other hand, close to the critical temperature, $T\simeq T_c$,
by expanding the hyperbolic tangent, one instead finds: 
\hbox{$\bold{\Delta}(T)\simeq\sqrt{3}\, \bold{\Delta}(0)\,\big(1-T/T_c\big)^{1/2}$}, in good agreement with
the known results.
\end{remark} 

\begin{remark}
It should be stressed that our treatment of the charge qubit circuit, and in particular of the Josephson junction
in it, is fully gauge invariant, no phase has been fixed in the choice of collective fluctuation operators,
nor in the derivation of their mesoscopic limit and dynamics. This is in contrast with the usually adopted approaches,
based on the mean-field approximation, in which typically the relative phase of the the two-layer gaps
is fixed ({\it e.g.} see \cite{Unnerstall,Lauwers}).
In particular, the Josephson current operator in (\ref{current}) can have a non-vanishing expectation value
only on states with a fixed phase. Indeed, using the relative coordinates, 
any state on the circle can be written as:
\begin{equation}
|\psi\rangle = \frac{1}{2} \int_0^{2\pi} {\rm d}\varphi\ \psi(\varphi)\, |\varphi\rangle\ ,
\end{equation}
where $\psi(\varphi)$ is a suitable weight function, while the state $|\varphi\rangle\equiv \sum_n e^{in\varphi}|n\rangle$, 
with $|n\rangle$ as in (\ref{state-n}), is formally an eigenstate of the angle operator.
Clearly, the average of the Josephson operator (\ref{current}) on $|\psi\rangle$,
\begin{equation}
\big\langle \, J\, \big\rangle_\psi = \frac{\mathcal{E}_J}{2\pi} \int_0^{2\pi} {\rm d}\varphi\ |\psi(\varphi)|^2\ \sin \varphi\ ,
\end{equation}
will reproduce the expected value only for a smearing function $\psi(\varphi)$ peaked at a given phase
$\bar\varphi$, for which
\begin{equation}
\big\langle \, J\, \big\rangle_\psi = \mathcal{E}_J \ \sin \bar\varphi\ ,
\end{equation}
thus recovering the standard phenomenological result.
\end{remark}

\section{Discussion}

In dealing with many-body quantum systems, made of $N$ microscopic components, 
the relevant observables are collective ones, consisting of suitably scaled sums of microscopic operators. 
Among them, macroscopic averages that scale as the inverse of $N$ provide, in the large-$N$ limit, 
a description of the emerging commutative, henceforth classical, collective features of such quantum systems.
However, other relevant classes of collective observables can be constructed, the so-called quantum fluctuations,
scaling with different powers of $1/N$, while retaining quantum features in the large-$N$ limit;
for instance, whenever considering states with low correlation content, fluctuations behave as bosonic operators thus obeying canonical commutation relations.
These collective observables describe many-body physics at a mesoscopic scale,
in between the purely quantum behaviour of microscopic observables and the purely 
classical one of commuting macroscopic ones.

We have shown that quantum fluctuations are the most appropriate choice of observables for describing
the quantum behaviour of superconducting circuits based on Josephson junctions. The dynamics of
these systems involves collective phenomena that can not be described by looking at the behaviour
of finite-$N$ microscopic constituents, nor at a macroscopic, classical scale.
Instead, we have looked at the collective behaviour of superconducting junctions by means of two fluctuation quantities: 
the excess number of Cooper pairs on the junction and the phase difference between the superconducting condensates.
We have found that, in the large-$N$ limit, these quantities behave as conjugated canonical quantum operators, 
acting like momentum and angle variable for a particle on a circle inside an anharmonic potential.
Remarkably, their emergent mesoscopic dynamics
is generated by a quantum Hamiltonian which reproduces the phenomenological circuit equations,
without the need of a re-quantization of a semiclassical inspired model. As a byproduct, our fully quantum approach
provides a derivation of the temperature dependence of the critical current that, in the strong-coupling scenario, is not accessible by means of standard approximation methods.

Although, for sake of definiteness, our considerations have been focused on a particular class of superconducting circuits, 
those known as charge qubits, the presented techniques, based on the so-called GNS representation
and a strong-coupling, quasi-spin approach, are quite general, and can be applied to model
more complex circuits, {\it e.g.} involving flux-qubits and transmons. In this respect,
the possibility of the presence of more superconductive devices in the same circuit can allow the study
of their emerging mutual entanglement in a purely quantum mechanical setting, without any 
assumption based on phenomenological or semiclassical considerations;
this perspective is one of the most intriguing outcomes of our investigation.
\vskip .5cm

\noindent
\textbf{Acknowledgements}

FB acknolewdges financial support from PNRR MUR project PE0000023-NQSTI.

FC acknowledges funding from the Deutsche Forschungsgemeinschaft (DFG, German Research Foundation) under Project No. 435696605 and through the Research Unit FOR 5413/1, Grant No. 465199066 as well as from the European Union's Horizon Europe research and innovation program under Grant Agreement No. 101046968 (BRISQ).  FC~is indebted to the Baden-W\"urttemberg Stiftung for the financial support by the Eliteprogramme for Postdocs.

\appendix
\section{Single superconductor}

\label{app1}

In this Appendix we provide the proofs of Theorems~\ref{ch5.1-th:reconstruction}
and~\ref{ch5.1-th:dyn}: we begin with some useful estimates.

\subsection{Preliminary results}

\begin{proposition}
Given the definition of the collective spin operators in \eqref{collective-spin}, 
one finds the following norm estimate:
\begin{subequations}\label{app0-eq:norm-sxyz}
\begin{gather}
    \bignorm{S^N_{x,y,z}} = \frac{N}{2}\ ,
    \label{app0-eq:norm-sxyz-bis}
    \\
    \bignorm{S^N_\pm} \le \frac{N+1}{2}\ ,
    \label{app0-eq:norm-S^N_pm}
    \\
    \Bignorm{\Bigl[(S^N_+)^n,(S^N_-)^m\Bigr]} = \bigo{(N^{n+m-1})}\ .
    \label{app0-eq:norm-comm-macro-spin}
\end{gather}
\end{subequations}
\end{proposition}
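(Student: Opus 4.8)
The plan is to dispatch the three estimates in increasing order of difficulty: \eqref{app0-eq:norm-sxyz-bis} and \eqref{app0-eq:norm-S^N_pm} are direct spectral facts about the $su(2)$ generators, while \eqref{app0-eq:norm-comm-macro-spin} requires an inductive argument exploiting the commutation relations.

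For \eqref{app0-eq:norm-sxyz-bis} I would observe that $S^N_z=\tfrac12\sum_{k=1}^N\sigma_z^{(k)}$ is self-adjoint with eigenvalues equal to one half the difference between the numbers of up- and down-spins, hence ranging over $\{-N/2,-N/2+1,\dots,N/2\}$; being the spectral radius of a self-adjoint operator, its norm is $N/2$. A global rotation carrying $\sigma_z$ into $\sigma_x$ (resp.\ $\sigma_y$) on every site is unitary and maps $S^N_z$ onto $S^N_x$ (resp.\ $S^N_y$), so the three norms coincide. For \eqref{app0-eq:norm-S^N_pm} I would use the $C^*$-identity $\norm{S^N_+}^2=\norm{S^N_-S^N_+}$ together with the algebraic relation $S^N_-S^N_+=(\vec{S}^N)^2-S^N_z(S^N_z+\identity)$. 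The Casimir $(\vec{S}^N)^2$ is block-diagonal with eigenvalues $s(s+1)$, $s\le N/2$, so $(\vec{S}^N)^2\le\tfrac{N}{2}\bigl(\tfrac{N}{2}+1\bigr)\identity$, whereas $S^N_z(S^N_z+\identity)=(S^N_z+\tfrac12\identity)^2-\tfrac14\identity\ge-\tfrac14\identity$. Combining these operator inequalities yields $S^N_-S^N_+\le\tfrac{(N+1)^2}{4}\identity$, hence $\norm{S^N_+}\le(N+1)/2$; the same bound for $S^N_-$ follows by adjunction.

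The estimate \eqref{app0-eq:norm-comm-macro-spin} is the real content and the main obstacle. Writing $A=S^N_+$ and $B=S^N_-$, the base relation is $[A,B]=2S^N_z$, so that $\bignorm{[A,B]}=N=\bigo{(N)}$, while $\norm{A},\norm{B},\norm{S^N_z}=\bigo{(N)}$ by the first two parts. I would first handle $n=1$ through the telescoping identity $[A,B^m]=\sum_{j=0}^{m-1}B^{j}\,[A,B]\,B^{m-1-j}$: each summand is a product of $m$ factors of norm $\bigo{(N)}$, so $\bignorm{[A,B^m]}\le m\,\norm{B}^{m-1}\,\bignorm{[A,B]}=\bigo{(N^{m})}$. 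I would then induct on $n$ via $[A^{n},B^{m}]=A\,[A^{n-1},B^{m}]+[A,B^{m}]\,A^{n-1}$, so that the inductive hypothesis $\bignorm{[A^{n-1},B^{m}]}=\bigo{(N^{n+m-2})}$ and the $n=1$ bound $\bignorm{[A,B^{m}]}=\bigo{(N^{m})}$ make both summands $\bigo{(N^{n+m-1})}$, closing the induction.

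The crux of the argument—and the origin of the \emph{gain of one power of $N$} over the naive bound $\norm{A}^n\norm{B}^m=\bigo{(N^{n+m})}$—lies precisely in the relation $[A,B]=2S^N_z$: commuting one $A$ past one $B$ trades two factors of norm $\bigo{(N)}$ for the single factor $2S^N_z$, again of norm $\bigo{(N)}$, so $[A,B^m]$ scales as $N^{m}$ rather than $N^{m+1}$. The induction merely propagates this one-step improvement. The only point requiring care is that the number of summands generated is polynomial in $n$ and $m$ but independent of $N$, so it is harmless: it is absorbed into the constant of the $\bigo{}$ estimate and does not affect the $N$-scaling.
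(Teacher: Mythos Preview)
Your proof is correct and follows essentially the same approach as the paper: spectral reasoning for \eqref{app0-eq:norm-sxyz-bis}, the identity $S^N_\mp S^N_\pm=(\vec S^N)^2-(S^N_z)^2\mp S^N_z$ for \eqref{app0-eq:norm-S^N_pm}, and a Leibniz-type induction for \eqref{app0-eq:norm-comm-macro-spin}. The only cosmetic difference is that for the $n=1$ step the paper inducts on $m$ via $[A,B^m]=B[A,B^{m-1}]+[A,B]B^{m-1}$, whereas you write out the equivalent full telescoping sum; both yield the same $\bigo{(N^m)}$ bound and the subsequent induction on $n$ is identical.
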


\begin{proof}
The first result follows from the fact that the norm of an operator $X$ on a finite-dimensional 
Hilbert space amounts to the square root of the maximum eigenvalue of $X\dag X$, while
the second one by writing:
$$
S^N_\pm S^N_\mp = (\vec{S}^N)^2 - (S^N_z)^2 \pm S^N_z\ .
$$ 
The third bound can be proved by induction: for $n=1$ and $m=1$, 
\begin{equation}
    \Bignorm{\Bigl[S^N_+,S^N_-\Bigr]} = \norm{2S^N_z} = N = \bigo{(N)},
\end{equation}
while for $n=1$ and generic $m$, assuming~\eqref{app0-eq:norm-comm-macro-spin} to hold for $m-1$, 
\begin{align*}
&
\Bignorm{\Bigl[S^N_+,(S^N_-)^m\Bigr]}
\le \Bignorm{S^N_-\Bigl[S^N_+,(S^N_-)^{m-1}\Bigr]} + \Bignorm{\Bigl[S^N_+,S^N_-\Bigr](S^N_-)^{m-1}} \\
        &\hskip 1cm\le \Bignorm{S^N_-}\,\Bignorm{\Bigl[S^N_+,(S^N_-)^{m-1}\Bigr]}+ \Bignorm{2S^N_z}\,\Bignorm{S^N_-}^{m-1} \\
        & \hskip 2cm\le \frac{N+1}{2}\,\Bignorm{\Bigl[S^N_+,(S^N_-)^{m-1}\Bigr]} + N\,\left(\frac{N+1}{2}\right)^{m-1}= \bigo{(N^m)}.
\end{align*}
For generic $n$, we proceed by induction in a similar way.
\end{proof}

\begin{remark}
The previous estimates yields the following bounds for the norms of the phase operators in~\eqref{phase},
with $N\ge1$ and $m,n\in\N$:
\begin{subequations}
\label{app0-eq:norm-E_N}
\begin{gather}
    \Bignorm{(E_\pm^N)^m} \le \left(\frac{N+1}{2N \mathfrak{c}}\right)^m \le \left(\frac{1}{\mathfrak{c}}\right)^m = \bigo{(N^0)}    
    \label{app0-eq:norm-E_N-explicit}\ ,\\
    \Bignorm{\Bigl[(E_-^N)^n,(E_+^N)^m\Bigr]}=\bigo{(N^{-1})}.
    \label{app0-eq:comm-E^N}
\end{gather}
\end{subequations}
\end{remark}
\noindent
Concerning the correlation functions in~\eqref{F-correlation}, one has the following useful results:
\begin{corollary}\label{app0-corollary:useful-limits}
The following limit holds:
\begin{equation}\label{app0-eq:phase-op-exchange}
\lim_{N\to\infty}\Bigexpvalue{\Bigl(\prod_{j=1}^r (E_-^N)^{n_j}(E_+^N)^{m_j}\Bigr)}_\beta^N 
= \lim_{N\to\infty} \bigexpvalue{(E_-^N)^n(E_+^N)^m}_\beta^N ,
\end{equation}
for all choices of non negative integers $\{n_j,m_j\}_{j=1}^r$, where $n\equiv\sum_{j=1}^rn_j$ and $m\equiv\sum_{j=1}^rm_j$.
Furthermore, if a sequence of (local) observables $\{X^N\}$ satisfies
\begin{equation}
    \lim_{N\to\infty}\bigexpvalue{(X^N)\dag X^N}_\beta^N = \lim_{N\to\infty}\bignorm{X^N\ket{\omeganb}}^2 = 0,
\end{equation}
then,
\begin{gather}\label{app0-eq:useful-limit}
    \lim_{N\to\infty}\bigexpvalue{(E_-^N)^n(E_+^N)^m\, X^N}_\beta^N = 0.  
\end{gather}
\end{corollary}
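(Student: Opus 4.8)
The plan is to prove the two assertions separately, in both cases leaning on the norm estimates of the preceding Remark --- the uniform bound $\norm{(E_\pm^N)^m}=\bigo{(N^0)}$ and the vanishing commutator $\norm{[(E_-^N)^n,(E_+^N)^m]}=\bigo{(N^{-1})}$ --- together with the elementary fact that $\ket{\omeganb}$ is a unit vector, so that $\abs{\expvalue{Y}_\beta^N}\le\norm{Y}$ for every operator $Y$.

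For the identity~\eqref{app0-eq:phase-op-exchange} the idea is that the operators inside the two expectation values differ only through the ordering of the factors, and that each reordering costs a commutator of vanishing norm. Concretely, I would bring the product $\prod_{j=1}^r (E_-^N)^{n_j}(E_+^N)^{m_j}$ into the single ordered form $(E_-^N)^n(E_+^N)^m$ by repeatedly applying the swap $(E_+^N)^{m_j}(E_-^N)^{n_{j+1}}=(E_-^N)^{n_{j+1}}(E_+^N)^{m_j}+\bigl[(E_+^N)^{m_j},(E_-^N)^{n_{j+1}}\bigr]$, each time keeping the commutator as a remainder and continuing to reorder only the leading term. After a finite number of such swaps --- a number fixed by $r$ and by $\{m_j,n_j\}$, and independent of $N$ --- the leading term becomes exactly $(E_-^N)^n(E_+^N)^m$ (the like powers simply concatenate), while the accumulated remainders form a finite sum of terms each containing one commutator factor of norm $\bigo{(N^{-1})}$ multiplied by finitely many factors of bounded norm. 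Hence $\norm{\prod_{j=1}^r (E_-^N)^{n_j}(E_+^N)^{m_j}-(E_-^N)^n(E_+^N)^m}=\bigo{(N^{-1})}$, and taking the expectation in the unit vector $\ket{\omeganb}$ shows that the two sides of~\eqref{app0-eq:phase-op-exchange} have the same large-$N$ limit.

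For~\eqref{app0-eq:useful-limit} I would simply invoke Cauchy--Schwarz. Setting $A\equiv(E_-^N)^n(E_+^N)^m$ and using $E_-^N=(E_+^N)\dag$, one rewrites $\expvalue{A\,X^N}_\beta^N=\braket{A\dag\,\omeganb | X^N\,\omeganb}$, whence $\abs{\expvalue{A\,X^N}_\beta^N}\le\norm{A\dag\ket{\omeganb}}\,\norm{X^N\ket{\omeganb}}\le\norm{A}\,\norm{X^N\ket{\omeganb}}$. The prefactor is controlled uniformly in $N$ by $\norm{A}\le\norm{(E_-^N)^n}\,\norm{(E_+^N)^m}\le(1/\mathfrak{c})^{n+m}=\bigo{(N^0)}$, while $\norm{X^N\ket{\omeganb}}\to0$ by hypothesis, so the product vanishes in the limit.

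I expect the only real work to be the bookkeeping of the first part: the reordering must be organised so that it is manifest that every remainder carries exactly one commutator --- and therefore a single factor $\bigo{(N^{-1})}$ --- with all other operators kept norm-bounded, so that no accumulation of these $\bigo{(N^{-1})}$ corrections can survive the large-$N$ limit. Once the telescoping is set up in this way the estimate is immediate, and the second assertion is then a one-line consequence of Cauchy--Schwarz together with the uniform norm bound on $E_\pm^N$.
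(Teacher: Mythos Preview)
Your proposal is correct and follows essentially the same route as the paper: for~\eqref{app0-eq:phase-op-exchange} you reorder the factors by successive swaps, each one contributing a commutator of norm $\bigo{(N^{-1})}$ multiplied by uniformly bounded factors (the paper illustrates this with a single swap and leaves the iteration implicit, while you make the telescoping explicit); for~\eqref{app0-eq:useful-limit} you apply Cauchy--Schwarz together with the uniform bound $\norm{(E_-^N)^n(E_+^N)^m}\le(1/\mathfrak{c})^{n+m}$, which is exactly what the paper does.
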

\begin{proof}
Equality~\eqref{app0-eq:phase-op-exchange} means that in the large-$N$ limit the operators $E_\pm^N$ 
that alternate in the product at the left hand side of  the equality can be 
harmlessly regrouped. Indeed, the exchange of, say, $(E_-^N)^{n_1}$ and $(E_+^N)^{m_1}$ 
must be compensated by their commutator. However,
due to \eqref{app0-eq:norm-E_N-explicit} and \eqref{app0-eq:comm-E^N}, the modulus of  
\begin{equation}
    \Bigexpvalue{\left[(E_-^N)^{n_1},(E_+^N)^{m_1}\right]\,\Bigl(\prod_{j=2}^r (E_-^N)^{n_j}(E_+^N)^{m_j}\Bigr)}_\beta^N ,
\end{equation}
is bounded from above by 
\begin{equation}
\nonumber
    \Bignorm{\Bigl[(E_-^N)^{n_1},(E_+^N)^{m_1}\Bigr]}\,\Bigl(\prod_{j=2}^r \bignorm{(E_-^N)^{n_j}}\bignorm{(E_+^N)^{m_j}}\Bigr)
    \bignorm{\ket{\omeganb}}^2= \bigo{(N^{-1})}.
\end{equation}
The limit~\eqref{app0-eq:useful-limit} follows instead from the  
Cauchy-Schwarz inequality and the bound~\eqref{app0-eq:norm-E_N-explicit}:
\begin{align*}
        \Bigabs{\bigexpvalue{(E_-^N)^n(E_+^N)^m\, X^N}_\beta^N }
        & \le \Bignorm{(E_-^N)^n(E_+^N)^m}\, \bignorm{\ket{\omeganb}}\,\bignorm{X^N\ket{\omeganb}} \\
        & \le \Bigl(\frac{1}{\mathfrak{c}}\Bigr)^{n+m}\bignorm{X^N\ket{\omeganb}},
\end{align*}
where we also used that $\ket{\Omega^N_\beta}$ is normalized.
\end{proof}

\subsection{Theorem~\ref{ch5.1-th:reconstruction}: fluctuations in the large-$N$ limit}

The proof of Theorem~\ref{ch5.1-th:reconstruction} relies on the large-$N$ behaviour of the correlation functions in the strong-coupling BCS model, and their equivalence to the correlation functions evaluated in mean-field theory, and on  Corollary~\ref{app0-corollary:useful-limits}.

\reconstruction*

\begin{proof}

By means of the second algebraic relation in~\eqref{algebra-N}, 
we can exchange the operators inside the correlation functions on the left hand side 
of~\eqref{mesoscopic-correlations} above
and bring all unitaries generated by $p^N$ to the right. 
For example, consider the first term of the product:
\begin{equation}
    \begin{split}
        e^{i\alpha_1 p^N} (E_-^N)^{n_1} (E_+^N)^{m_1}
        & = e^{-in_1\alpha_1} (E_-^N)^{n_1} e^{i\alpha_1 p^N} (E_+^N)^{m_1} \\
        & = e^{-i(n_1-m_1)\alpha_1} (E_-^N)^{n_1} (E_+^N)^{m_1} e^{i\alpha_1 p^N}.
    \end{split}
\end{equation}
By iteration one finds:
\begin{align}
\nonumber
& \prod_{j=1}^r e^{i\alpha_j p^N} (E_-^N)^{n_j}(E_+^N)^{m_j} 
= \exp{\Bigl(i\sum_{j=1}^r\sum_{k=1}^j\alpha_{k}(m_j-n_j)\Bigr)}\\
 \label{app1-th:rec-eq:phase-pullout}
       &\hskip 5cm
        \times \Bigl(\prod_{j=1}^r (E_-^N)^{n_j}(E_+^N)^{m_j} \Bigr)e^{i\sum_{j=1}^r\alpha_j p^N},
\end{align}
so the phase factor in the first equality of~\eqref{mesoscopic-correlations} is retrieved. 
Moreover, due to~\eqref{on-vacuum}, 
the unitaries generated by $p^{N}$ leave $\ket{\omeganb}$ invariant. Then, we just need to prove that 
\begin{equation}
\label{app1-ch5.1-th:rec-eq:reduced-lim}
    \lim_{N\to\infty} \Bigexpvalue{ \prod_{j=1}^r (E_-^N)^{n_j}(E_+^N)^{m_j}}_\beta^N = \delta_{m,n}\ . 
\end{equation}
Recalling \eqref{app0-eq:phase-op-exchange}, it is sufficient to prove that:
\begin{equation}
\label{reduced-lim-2}
\lim_{N\to\infty}\bigexpvalue{(E_-^N)^{n}(E_+^N)^{m}}_\beta^N = \delta_{m,n}\ .
\end{equation}
Consider $n\ne m$ with $N>n,m>0$; recalling the definition(\ref{phase}),
\begin{equation*}
\label{app1-eq:delta-origin}
(E_-^N)^n(E_+^N)^m \ket{\omeganb}
    \propto\sum_{s,s_z}\sqrt{\rho^N(s,s_z)} \Gamma(s,s_z,m,n) \ \bigoplus_{\alpha=1}^{d(s)}\ket{s,s_z+m-n}_\alpha\otimes\ket{s,s_z}_\alpha\ ,
\end{equation*}
where $\ket{s,s_z+m-n}_\alpha=0$ if either $s_z+m-n>s$ or $s_z+m-n<-s$. Moreover, the real factor $\Gamma(s,s_z,m,n)$ 
comes from the action of the operators $E^N_\pm$ on the spin eigenstates:  thanks to~\eqref{app0-eq:norm-E_N-explicit}, it 
remains bounded as $N$ becomes large.  
Taking the scalar product of the previous vector 
with $\ket{\omeganb}$ yields
\begin{eqnarray*}
 &&
 \sum_{s',s_z'}\sum_{s,s_z}\sum_{\alpha_1=1}^{d(s)}\sum_{\alpha_2=1}^{d(s')}\,\sqrt{\rho^N(s',s'_z)\, \rho^N(s,s_z)}\Gamma(s,s_z,m,n)\\
 &&\hskip 3cm 
\times\  _{\alpha_1}\!\braket{s',s_z' | s,s_z+m-n}_{\alpha_2}\ _{\alpha_1}\!\braket{s',s_z' | s,s_z}_{\alpha_2} = 0.
\end{eqnarray*}
Let $m=n$, then
\begin{equation}\label{app1-th:rec-eq:boh}
    \bigexpvalue{ (E_-^N)^n(E_+^N)^n }_\beta^N
    = \omeganb\biggl( \Bigl(\frac{1}{\mathfrak{c}N}\Bigr)^{2n} (S^N_-)^{n}(S^N_+)^{n} \biggr).
\end{equation}
Thanks to~\eqref{Thirring}, the large-$N$ limit of the right hand side of~\eqref{app1-th:rec-eq:boh} reads:
\begin{equation}\label{app1-th:rec-eq:lim-mf}
    \lim_{N\to\infty}\int_0^{2\pi} \frac{d\varphi}{2\pi}\, \Bigl(\frac{1}{\mathfrak{c}N}\Bigr)^{2n}
    \langle\!\langle  (S^N_-)^{n}(S^N_+)^{n} \rangle\!\rangle_\beta^N\ .
\end{equation}
Recall that $\langle\!\langle  \,\cdot\, \rangle\!\rangle_\beta^N$ was defined in (\ref{mf-average}) 
as the Gibbs state corresponding to the mean-field version of the 
strong-coupling BCS Hamiltonian. Explicitly, one has
\begin{equation*}
       \frac{(S^N_-)^{n}(S^N_+)^{n}}{N^{2n}}
        = \sum_{q_1, \cdots, q_n=1}^N\sum_{p_1,\cdots,p_n=1}^N\frac{\sigma^{(q_1)}_-}{N}\cdots\frac{\sigma^{(q_{n})}_-}{N}\frac{\sigma^{(p_1)}_+}{N}\cdots\frac{\sigma^{(p_{n})}_+}{N}\ ,
\end{equation*}
where all upper indices of type $q$, respectively $p$ must differ, for $\sigma_\pm^2=0$. Furthermore, when $q_i=p_j$ the corresponding sum $\sum_{q_i}$
vanishes as $1/N$. 
Then, the  mean-field expectations carry no correlations between spins at different locations 
and are site independent. Therefore, 
\begin{align*}
    &\langle\!\langle \bigl(\sigma^{(q_1)}_-\cdots\sigma^{(q_{n})}_-\bigr)\bigl(\sigma^{(p_1)}_+\cdots\sigma^{(p_{n})}_+\bigr) \rangle\!\rangle_\beta^N= \\
    &\hskip 3cm= \bigl(\langle\!\langle\sigma_-\rangle\!\rangle_\beta^N\bigr)^n
    \bigl(\langle\!\langle\sigma_+\rangle\!\rangle_\beta^N\bigr)^n = \Delta^{2n}=\mathfrak{c}^{2n},
\end{align*}
where \eqref{Thirring-2} has been used. Thus, the limit in~\eqref{app1-th:rec-eq:lim-mf} reduces to
\begin{equation*}
    \lim_{N\to\infty}\int_0^{2\pi} \frac{d\varphi}{2\pi}\,\Bigl(\frac{1}{\mathfrak{c}N}\Bigr)^{2n} \frac{N!}{(N-2n)!}
   \, \mathfrak{c}^{2n} = 1,
\end{equation*}
and \eqref{reduced-lim-2} is proven.

\end{proof}

\subsection{Theorem~\ref{ch5.1-th:dyn}: fluctuations dynamics for a single superconductor}

In the following we prove Theorem~\ref{ch5.1-th:dyn} for the dynamics of the single BCS system. We shall deal with the evolution operator
\begin{equation}
\mathcal{U}^N(t)\equiv e^{-it(\mathcal{H}^N\otimes\identity-\identity\otimes \mathcal{H}^N)} =
e^{-it({H}^N\otimes\identity-\identity\otimes {H}^N)}\, e^{-it\, (2\mu  p^N)}\ .
\label{time-evolution-2}
\end{equation}
We show that the Hamiltonian $H^N$ in (\ref{BCS-spin-2}) does not contribute to the time evolution of the fluctuation operators in the large-$N$ limit. In order to do so, we introduce and characterize a new operator.

\begin{proposition}
\label{app1-prop:help-dynamics}
The operator $E^N(t)$ defined by
\begin{equation}\label{app1-ch5.1-th:dyn-eq:E^N_t-def}
    E_\pm^N(t)\equiv e^{-it({H}^N\otimes\identity-\identity\otimes {H}^N)}\,
    E_\pm^N\, e^{it({H}^N\otimes\identity-\identity\otimes {H}^N)}\ ,
\end{equation}
is such that
\begin{equation}\label{app1-eq:E^N_t-explicit-expr}
    E^N_\pm(t) = E_\pm^N\, W^N(t),
\end{equation}
where $W^N(t)$ is the following unitary time-evolution generated by a Hamiltonian $K^N$:
\begin{equation}\label{K^N-def}
    W^N(t) \equiv e^{-itK^N},\quad K^N \equiv -2\epsilon\, \identity +  \frac{2 T_c}{N}S^N_z\ .
\end{equation}
Furthermore, $W^N(t)$ satisfies the following properties for $m\in\N$:
\begin{subequations}\label{app1-eq:W^N(t)-properties}
\begin{gather}
    \Bignorm{\Bigl[E_\pm^N,W^N(t)\Bigr]} = \bigo{(N^{-1})}\ ,
    \label{app1-eq:comm-E^N-W^N(t)-vanish}\\ 
    \lim_{N\to\infty} \bigexpvalue{(W^N(t))^m}^N_\beta = 1\ ,
    \label{app1-eq:W^N(t)-vanish-exp-value}\\ 
    \lim_{N\to\infty} \bignorm{\bigl((W^N(t))^m-1\bigr)\ket{\omeganb}} = 0\ . 
    \label{app1-eq:W^N(t)-vanish-norm}
\end{gather}
\end{subequations}
\end{proposition}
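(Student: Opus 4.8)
The plan is to establish the factorization first and then read off the three properties from the explicit form of $W^N(t)$. Since $E^N_\pm=S^N_\pm\otimes\identity/(\mathfrak{c}N)$ lives on the first GNS factor, it commutes with the commutant piece $\identity\otimes H^N$, so the two-sided conjugation defining $E^N_\pm(t)$ collapses to $E^N_\pm(t)=\tfrac{1}{\mathfrak{c}N}\bigl(e^{-itH^N}S^N_\pm e^{itH^N}\bigr)\otimes\identity$. The structural fact I would exploit is $[H^N,S^N_z]=0$, which holds because $S^N_+S^N_-=(\vec S^N)^2-(S^N_z)^2+S^N_z$ is a function of the Casimir and of $S^N_z$. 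Diagonalizing $H^N$ and $S^N_z$ simultaneously in the basis $\ket{s,s_z}$ and using that $S^N_+$ is a ladder operator, one gets $e^{-itH^N}S^N_+e^{itH^N}\ket{s,s_z}=e^{it(\eta^N(s,s_z)-\eta^N(s,s_z+1))}S^N_+\ket{s,s_z}$ with $\eta^N$ the spectrum~\eqref{spectrum}; the gap $\eta^N(s,s_z)-\eta^N(s,s_z+1)$ is affine in $s_z$ with slope of order $N^{-1}$, and collecting these phases into a single diagonal unitary yields $E^N_+(t)=E^N_+W^N(t)$ with $W^N(t)=e^{-itK^N}$ as in~\eqref{K^N-def}. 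Equivalently one verifies $[H^N,S^N_+]=S^N_+K^N$ directly from $[S^N_z,S^N_\pm]=\pm S^N_\pm$ and $[S^N_+,S^N_-]=2S^N_z$, so that the ansatz $E^N_+W^N(t)$ solves the Heisenberg equation $\dot E^N_+(t)=-i[H^N,E^N_+(t)]$. The $E^N_-$ identity is the adjoint (with $W^N(-t)$ in place of $W^N(t)$); since the three properties below are symmetric under $t\mapsto-t$ and stable under $\bigo{(N^{-1})}$ reorderings, this is immaterial.

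For the commutator bound~\eqref{app1-eq:comm-E^N-W^N(t)-vanish} the constant term of $K^N$ drops out, so only its $S^N_z$-linear part (of coefficient $\bigo{(N^{-1})}$) contributes. Using the rotation identity $e^{-i\theta S^N_z}S^N_\pm e^{i\theta S^N_z}=e^{\mp i\theta}S^N_\pm$, with $\theta$ equal to $t$ times that coefficient, the commutator reduces to $(1-e^{\mp i\theta})$ times $S^N_\pm/(\mathfrak{c}N)$ up to a phase. Hence $\bignorm{[E^N_\pm,W^N(t)]}\le\tfrac{1}{\mathfrak{c}N}\,\abs{1-e^{\mp i\theta}}\,\bignorm{S^N_\pm}$, and since $\abs{1-e^{\mp i\theta}}\le\abs{\theta}=\bigo{(N^{-1})}$ while $\bignorm{S^N_\pm}\le(N+1)/2=\bigo{(N)}$ by~\eqref{app0-eq:norm-S^N_pm}, the product is $\bigo{(N^{-1})}$.

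The statements~\eqref{app1-eq:W^N(t)-vanish-exp-value} and~\eqref{app1-eq:W^N(t)-vanish-norm} are the crux. Because $K^N$ is $t$-independent, $(W^N(t))^m=W^N(mt)=e^{2imt\epsilon}\,e^{-imt(K^N+2\epsilon\identity)}$, so~\eqref{app1-eq:W^N(t)-vanish-exp-value} reduces to the large-$N$ limit of $e^{2imt\epsilon}\expvalue{\exp\bigl(-imt\,(K^N+2\epsilon\identity)\bigr)}^N_\beta$, the exponent being proportional to the macroscopic observable $S^N_z/N$. Since $S^N_z$ is gauge invariant, I would compute this average through the Thirring decomposition~\eqref{Thirring} and the factorization of the mean-field Gibbs state over sites; the product structure exhibits the self-averaging of $S^N_z/N$ and shows that the average converges to $e^{2imt\epsilon}$ times a phase fixed solely by the mean-field magnetization $\langle\!\langle\sigma_z\rangle\!\rangle_\beta^N$. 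The decisive input is the value $\langle\!\langle\sigma_z\rangle\!\rangle_\beta^N\to(\epsilon/\omega)\tanh(\beta\omega)=\epsilon/T_c$, where the last equality uses the consistency condition~\eqref{consistency}; the $S^N_z$-coefficient of $K^N$ is exactly the one for which the resulting magnetization phase cancels $e^{2imt\epsilon}$, so the limit is $1$. Then~\eqref{app1-eq:W^N(t)-vanish-norm} follows from unitarity alone: $\bignorm{(W^N(mt)-\identity)\ket{\omeganb}}^2=2-2\Re\expvalue{W^N(mt)}^N_\beta\to0$.

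The main obstacle is precisely this exact phase cancellation in~\eqref{app1-eq:W^N(t)-vanish-exp-value}: it hinges both on the dispersion-free behaviour of the macroscopic observable $S^N_z/N$ in the large-$N$ limit and on the precise value of its limiting expectation, which is pinned down by the gap/consistency equation~\eqref{consistency}. Once the form of $W^N(t)$ is secured, everything else — the factorization, the commutator estimate, and the reduction of the norm statement to the expectation one — is routine bookkeeping.
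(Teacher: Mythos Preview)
Your proposal is correct and follows essentially the same route as the paper: the factorization via $[H^N,S^N_+]=S^N_+K^N$ and iteration of the BCH expansion, the commutator estimate via the rotation identity, the expectation limit via the scalar large-$N$ limit of $S^N_z/N$ (the paper simply cites Thirring for $S^N_z/N\to\epsilon/T_c$ rather than deriving it from~\eqref{consistency} as you do), and the norm statement via unitarity. Your remark that taking adjoints yields $E^N_-(t)=W^N(-t)E^N_-$ rather than the stated $E^N_-W^N(t)$ is in fact sharper than the paper, which glosses over this point and later silently uses $(W^N(t))^\dagger E^N_-$ in the proof of Proposition~\ref{app1-prop:E^N_t-vanish}.
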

\begin{proof}
Since $E^N_-=(E^N_+)^\dagger$, we concentrate on $E^N_+$. The expression~\eqref{app1-eq:E^N_t-explicit-expr} for $E^N_+(t)$ can be computed by exploiting the Baker-Campbell-Hausdorff formula:
\begin{equation}\label{app1-dim:flim-dyn-eq:BCH}
 E^N_+(t) = E_+^N\,-\,i\,t\,\big[(H^N\otimes\identity-\identity\otimes H^N),\,E_+^N\big]+  \mathcal{O}(t)\ .
\end{equation}
Since $E_+^N$ is proportional to $S^N_+\otimes\identity$, all commutators with $\identity\otimes H^N$ vanish, 
while
\begin{align}
\nonumber
\Bigl[H^N, E_+^N\Bigr] 
    &= \frac{1}{\mathfrak{c}N} [H^N,S^N_+] 
    = \frac{1}{\mathfrak{c}N} \Biggl(-2\epsilon S^N_+ + \frac{2T_c}{N}S^N_+S^N_z\Biggr)\\
    \label{app1-dim:flim-dyn-eq:commut-H^N-E^N}
    &= E_+^N\, K^N\ .
\end{align}
Since $\Bigl[H^N\,,\,K^N\Bigr] = 0$,  substituting~\eqref{app1-dim:flim-dyn-eq:commut-H^N-E^N} into~\eqref{app1-dim:flim-dyn-eq:BCH} yields
\begin{equation*}
        \Bigl[H^N, \Bigl[H^N, E_+^N\Bigr]\Bigr] 
        = \Bigl[H^N, E_+^N\Bigr]\, K^N 
        = E_+^N\, (K^N)^2 .
\end{equation*}
Iterating the procedure,~\eqref{app1-dim:flim-dyn-eq:BCH} becomes
\begin{equation*}\label{app1_dim:flim-dyn_eq:time-ev-E^N-explicit-expr}
        E^N_+(t) = E_+^N\, \sum_{k=1}^{\infty}\frac{(-it)^k}{k!}(K^N)^k  
        = E_+^N\, e^{-itK^N}
        = E_+^N\, W^N(t),
\end{equation*}
thus proving the first statement in~\eqref{app1-eq:E^N_t-explicit-expr}. On the other hand, from
$$
        \Bigl[E_+^N\,,\,W^N(t)\Bigr]= E_+^N\, W^N(t) \Bigl(1-e^{-it2 T_c/N}\Bigr)\ ,
$$
and~\eqref{app0-eq:norm-E_N}, one estimates
$$
    \Bignorm{\Bigl[E_+^N\,,\,W^N(t)\Bigr]} \le \underbrace{\norm{E_+^N}}_{{}=1/\mathfrak{c}}\norm{W^N(t)} \Bigabs{1-e^{-it2T_c/N}} = \bigo{(N^{-1})},
$$
so that~\eqref{app1-eq:comm-E^N-W^N(t)-vanish} is proven. Let us now turn our attention to~\eqref{app1-eq:W^N(t)-vanish-exp-value}. We have
\begin{equation}
\label{app1-eq:expvalue-wnt}
    \bigexpvalue{(W^N(t))^m}_\beta^N =e^{it2m\epsilon}\omeganb\bigl(e^{-it2m T_c S^N_z/N}\bigr).
\end{equation}
Thanks to the results of Thirring in~\cite{Thirring1}, we know that $S^N_z/N$ converges to $\epsilon/T_c$, hence
\begin{equation}\label{app1-eq:proof-W^N(t)-vanish}
    \lim_{N\to\infty}\bigexpvalue{(W^N(t))^m}_\beta^N = e^{itm2\epsilon}e^{-itm 2\epsilon} = 1\ .
\end{equation}
The limit in~\eqref{app1-eq:W^N(t)-vanish-norm} follows from
\begin{equation}
    \begin{split}
        \bignorm{\bigl((W^N(t))^m-1\bigr)\ket{\omeganb}}^2 
        & = \bigexpvalue{\bigl((W^N(t))^m-1\bigr)\dag \bigl((W^N(t))^m-1\bigr)}_\beta^N \\
        & = 2\bigl(1-\Re{\expvalue{(W^N(t))^m}_\beta^N}\bigr),
    \end{split}
\end{equation}
and the last term goes to zero thanks to~\eqref{app1-eq:proof-W^N(t)-vanish}.
\end{proof}

Proposition~\ref{app1-prop:help-dynamics} together with Corollary~\ref{app0-corollary:useful-limits} can be used to prove the following Theorem  
which characterises the evolution operator in the mesoscopic limit. 

\dynamics*

\noindent
Notice that we are going to show that the  strong coupling Hamiltonian does not induce any mesoscopic dynamics, in agreement with the fact that the 
thermal state is an equilibrium state with respect to this evolution.

Recalling the definition of the mesoscopic limit in~\eqref{ch5.1-eq:def-fluc-lim}, we shall now introduce some useful short-hand notation 
to deal with the matrix elements appearing in the formulas. For any $n\in\Z$ we define the states
\begin{eqnarray}
&&
\ket{n} \equiv e^{in\hat\varphi}\ket{0}\ , \qquad
\ket{n}_\beta^N \equiv 
\begin{cases}
(E_{+}^{N})^{n}\ket{\omeganb}\,&\textup{if $n\ge0$,} \\
(E_{-}^{N})^{-n}\ket{\omeganb}\,&\textup{if $n<0$.} \\
\end{cases}
\end{eqnarray}
The states $\ket{n}$ form the orthonormal basis~\eqref{state-n} in the Hilbert space of the limit Heisenberg algebra on the circle; the states $\ket{n}_\beta^N$ 
are instead quasi-spin states. Notice that the integer labels of the states can be either positive or negative, while the powers of the collective quasi-spin operators operators are always non-negative.
In this way for example, the main statement of Theorem~\eqref{ch5.1-th:dyn} can be recast as
\begin{equation}
\label{eq:mlim-single-superc-evolution-matrixelements}
\lim_{N\to\infty} \braket{ n | \mathcal{U}^N(t) | m }_\beta^N = \braket{ n | \mathcal{U}(t) | m },
\end{equation}
for any $n,m\in\Z$. We can now turn our attention to the Theorem itself.

\begin{proof}
Though we need prove~\eqref{eq:mlim-single-superc-evolution-matrixelements} for any $n,m\in\Z$, we restrict to $n,\,m\geq 0$ since the proofs for all other cases are analogous.

Recalling the algebraic relation~\eqref{algebra-N}, and defining 
$\widetilde{\mathcal{U}}^N(t) \equiv e^{-it({H}^N\otimes\identity-\identity\otimes {H}^N)}$, one has:
\begin{align*}\label{app1-th:dyn-eq:lhs-phase-pullout}
 \braket{ n | \mathcal{U}^N(t) | m }_\beta^N = e^{-it2\mu m} \bigexpvalue{(E_-^N)^n\, \widetilde{\mathcal{U}}^N(t)\,(E_+^N)^m\, e^{-it2\mu p^N}}_\beta^N = e^{-it2\mu m} \braket{ n | \widetilde{\mathcal{U}}^N(t) | m }_\beta^N ,
\end{align*}
where the last equality follows from $p^N\ket{\omeganb}=0$, as proven in~\eqref{on-vacuum}. 
Similarly, using the Heisenberg algebra on the circle, 
\begin{equation*}
\label{app1-th:dyn-eq:rhs-phase-pullout}
        \braket{n | \mathcal{U}(t) | m}  = 
        e^{-it2\mu m} \braket{n | m} = e^{-it2\mu m} \delta_{n,m}\ .
\end{equation*}
Then, it remains to prove that 
\begin{equation*}\label{app1-dim:flim-dyn-eq:allwemustprove}
    \lim_{N\to\infty}   \braket{ n | \widetilde{\mathcal{U}}^N(t) | m }_\beta^N
    =
    \delta_{n,m}
    = \lim_{N\to\infty}
        \bigexpvalue{(E_-^N)^n (E_+^N)^m}\ ,
\end{equation*}
where the second equality  follows from \eqref{reduced-lim-2}. 
We now have
\begin{align*}
\label{app1_dim:flim-dyn_eq:expvalue-trick}
\braket{ n | \widetilde{\mathcal{U}}^N(t) | m }_\beta^N
& = \bigexpvalue{(E_-^N)^n (E^N_+(t))^m\, \widetilde{\mathcal{U}}^N(t)}_\beta^N =  \bigexpvalue{(E_-^N)^n (E^N_+(t))^m}_\beta^N = \\
& = \bigexpvalue{(E_-^N)^n\bigl(E_+^N\,W^N(t))^m}_\beta^N\ .
\end{align*}
For the first equality, we used~\eqref{app1-ch5.1-th:dyn-eq:E^N_t-def} 
and the unitarity of $\widetilde{\mathcal{U}}^N(t)$; for the second one that
$\widetilde{\mathcal{U}}^N(t)\,\ket{\omeganb}=\ket{\omeganb}$.
Finally, we substituted $E^{N}_+(t)= E^{N}_+W^{N}_{t}$ from~\eqref{app1-eq:E^N_t-explicit-expr} in Proposition~\ref{app1-prop:help-dynamics} which asserts
that $\norm{[E^{N}_+,W^N(t)]}=\bigo{(N^{-1})}$. Therefore, in the large-$N$ limit we can exchange 
the position of all the $E_+^N$ and the $W^N(t)$, so that:
\begin{align*}
   	\lim_{N\to\infty}\braket{ n | \widetilde{\mathcal{U}}^N(t) | m }_\beta^N & = \lim_{N\to\infty}\bigexpvalue{(E_-^N)^n\,(E_+^N\,W^N(t))^m}_\beta^N \\
	& =	\lim_{N\to\infty}\bigexpvalue{(E_-^N)^n(E_+^N)^m\,(W^N(t))^m}_\beta^N\ .
\end{align*}
Then, it remains to be proved that
\begin{equation*}\label{app1-th:dyn-eq:final}
    \lim_{N\to\infty}\bigexpvalue{(E_-^N)^n(E_+^N)^m\,\bigl((W^N(t))^m-1\bigr)}_\beta^N = 0\ .
\end{equation*}
From~\eqref{app1-eq:W^N(t)-vanish-norm} in Proposition \ref{app1-prop:help-dynamics}, the norm $\norm{\bigl((W^N(t))^m-1\bigr)\ket{\omeganb}}$ vanishes when $N\to\infty$. Therefore, we can directly apply~\eqref{app0-eq:useful-limit} from Corollary~\ref{app0-corollary:useful-limits}, using $X^N = (W^N(t))^m-1$. It follows that the limit on the right hand side of the previous equation vanishes as well.
\end{proof}

Finally, we provide a result that will often be used in Appendix~\ref{app2}. 

\begin{proposition} 
\label{app1-prop:E^N_t-vanish}
Let $\{m_j,n_j\in\N,\, j=1,\dots,r\}$ be a finite sequence of non-negative integers. Then the following mesoscopic limit holds:
\begin{equation}\label{app1-eq:mlim-ENt-to-EN}
    \mlim_{N\to\infty}\biggl( \prod_{j=1}^r (E_-^N(t))^{n_j} (E_+^N(t))^{m_j}\biggr) = \mlim_{N\to\infty} (E_-^N)^{n} (E_+^N)^{m} = \delta_{n,m},
\end{equation}
where we set $m=\sum_{j=1}^rm_j$, $n=\sum_{j=1}^rn_j$.
\end{proposition}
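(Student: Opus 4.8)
The plan is to reduce the time-evolved product to the un-evolved one already handled in Theorem~\ref{ch5.1-th:reconstruction}, exploiting the factorization $E_\pm^N(t) = E_\pm^N\,W^N(t)$ of Proposition~\ref{app1-prop:help-dynamics}, Eq.~\eqref{app1-eq:E^N_t-explicit-expr}. Substituting this into $\prod_{j=1}^r (E_-^N(t))^{n_j}(E_+^N(t))^{m_j}$ produces a product of the un-evolved operators $E_\pm^N$ interspersed with $n+m$ copies of the unitary $W^N(t)$ (one per factor $E_\pm^N(t)$). The whole strategy is to disentangle these unitaries, collect them on the far right as a single power $(W^N(t))^{n+m}$, and then show that this power acts as the identity on the GNS vacuum in the large-$N$ limit, so that the time evolution drops out.

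First I would move every $W^N(t)$ to the right past the un-evolved $E_\pm^N$ sitting to its right. Each exchange costs a commutator $[E_\pm^N, W^N(t)]$, whose norm is $\bigo(N^{-1})$ by Eq.~\eqref{app1-eq:comm-E^N-W^N(t)-vanish}; since there are only finitely many exchanges, all factors $E_\pm^N$ are uniformly bounded by Eq.~\eqref{app0-eq:norm-E_N-explicit}, and $\norm{W^N(t)}=1$, the accumulated error is $\bigo(N^{-1})$ and vanishes. After this rearrangement the product equals, up to a vanishing remainder, $\bigl(\prod_{j=1}^r (E_-^N)^{n_j}(E_+^N)^{m_j}\bigr)\,(W^N(t))^{n+m}$, the phases picked up when commuting $W^N(t)$ through $E_\pm^N$ all tending to $1$.

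Working inside the mesoscopic-limit expectation (that is, sandwiched between the test operators of Definition~\ref{ch5.1-def:mesoscopic-lim}), I would then transport $(W^N(t))^{n+m}$ further to the right past the remaining right-hand test operators, again at the cost of finitely many $\bigo(N^{-1})$ commutators, until it acts directly on $\ket{\omeganb}$. At that point Eq.~\eqref{app1-eq:W^N(t)-vanish-norm} gives $\bignorm{\bigl((W^N(t))^{n+m}-1\bigr)\ket{\omeganb}}\to0$, so setting $X^N=(W^N(t))^{n+m}-1$ and invoking the Cauchy--Schwarz estimate~\eqref{app0-eq:useful-limit} of Corollary~\ref{app0-corollary:useful-limits} shows that replacing $(W^N(t))^{n+m}$ by the identity introduces only a vanishing error. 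Hence the time-evolved product has the same mesoscopic limit as $\prod_{j=1}^r (E_-^N)^{n_j}(E_+^N)^{m_j}$.

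Finally, with the unitaries gone, I would apply the regrouping identity~\eqref{app0-eq:phase-op-exchange} of Corollary~\ref{app0-corollary:useful-limits} to collapse the alternating product into $(E_-^N)^{n}(E_+^N)^{m}$, and conclude from~\eqref{reduced-lim-2} (equivalently from Theorem~\ref{ch5.1-th:reconstruction}) that the limit is $\delta_{n,m}$. The main obstacle is the bookkeeping of the two preceding steps: disentangling the interspersed non-unitary $E_\pm^N$ from the unitaries $W^N(t)$ while tracking that each of the finitely many commutator remainders is $\bigo(N^{-1})$, and showing that the harvested power $(W^N(t))^{n+m}$ is asymptotically trivial on the vacuum. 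Everything else is a direct application of results already established.
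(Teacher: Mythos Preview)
Your approach matches the paper's: factorize $E_\pm^N(t)$ via Proposition~\ref{app1-prop:help-dynamics}, commute the unitaries $W^N(t)$ to the right at $\bigo(N^{-1})$ cost, pass them through the test operators of Definition~\ref{ch5.1-def:mesoscopic-lim}, and eliminate them on $\ket{\omeganb}$ via~\eqref{app1-eq:W^N(t)-vanish-norm} together with Corollary~\ref{app0-corollary:useful-limits}. One minor bookkeeping correction: since $E_-^N(t)=(E_+^N(t))^\dagger=(W^N(t))^\dagger E_-^N$, the $n$ factors coming from $E_-^N(t)$ contribute adjoints, so after collecting you obtain $(W^N(t))^{m-n}$ rather than $(W^N(t))^{n+m}$ (this is exactly what the paper writes in~\eqref{eq:Pt-def}); either power works, as any fixed integer power of $W^N(t)$ tends to the identity on $\ket{\omeganb}$.
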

\begin{proof}
Let us set
\begin{equation}
\begin{split}
\label{eq:Pt-def}
P^{N}_t(\{m\},\{n\}) & \equiv \Bigl(\prod_{j=1}^r (E_-^N(t))^{n_j} (E_+^N(t))^{m_j}\Bigr) = \Bigl(\prod_{j=1}^r\bigl( (W^N(t))\dag E_-^N \bigr)^{n_j} \bigl( E_+^N\, W^N(t) \bigr)^{m_j}\Bigr) = \\
	& = (E_-^N)^{n}(E_+^N)^{m}(W^N(t))^{m-n} + \Theta^N.
\end{split}
\end{equation}
In the second equality we directly substituted~\eqref{app1-eq:E^N_t-explicit-expr}; while in the third one we exchanged the operators $W^N(t)$ and $E^N_\pm$, thus obtaining a remainder $\Theta^N$ containing the necessary commutators. Then, the following bound holds
\begin{equation}
\begin{split}
\label{eq:norm-prod-ENt-bound}
\Bignorm{P^{N}_t(\{m\},\{n\}) - (E_-^N)^{n}(E_+^N)^{m}(W^N(t))^{m-n} } = \norm{\Theta^N} = \mathcal{O}(N^{-1}).
\end{split}
\end{equation}
Indeed, thanks to~\eqref{app0-eq:comm-E^N} and~\eqref{app1-eq:comm-E^N-W^N(t)-vanish}, all commutators vanish as $1/N$.

In order to prove the first equality of~\eqref{app1-eq:mlim-ENt-to-EN}, we consider the matrix element
\begin{equation}
\mathcal{M}^{(N)} \equiv \braket{n' | \left(P^{N}_t(\{m\},\{n\}) - (E_-^N)^{n} (E_+^N)^{m}\right) | m' }^N_\beta,
\end{equation} 
where in general $m',n'\in\Z$ and show that they vanish as $N$ grows large. It is sufficient to restrict to $m',\,n'\geq0$ as the proofs for all other cases are analogous.
Thanks to the bound~\eqref{eq:norm-prod-ENt-bound} we have
\begin{equation}
\label{eq:matrix-el}
\begin{split}
\abs{\mathcal{M}^{(N)}} \le	& \Bigabs{\braket{n' | (E_-^N)^{n} (E_+^N)^{m} ( (W^N(t))^{m-n} - 1) | m' }^N_\beta} + \mathcal{O}(N^{-1}) \\
						\le	& \Bigabs{\expvalue{(E_-^N)^{n+n'}(E_+^N)^{m+m'}( (W^N(t))^{m-n} - 1)}_\beta^N} + \mathcal{O}(N^{-1}),
\end{split}
\end{equation}
where in the second line we further exchanged $(W^N(t))^{m-n}$ and $(E_+^N)^{m'}$ at the cost of introducing a commutator, which nonetheless vanishes in the large-$N$ limit thanks once again to~\eqref{app1-eq:comm-E^N-W^N(t)-vanish}.

Finally, from~\eqref{app1-eq:W^N(t)-vanish-norm} in Proposition \ref{app1-prop:help-dynamics}, the norm $\norm{\bigl((W^N(t))^m-1\bigr)\ket{\omeganb}}$ vanishes when $N\to\infty$. Therefore, we can directly apply~\eqref{app0-eq:useful-limit} from Corollary~\ref{app0-corollary:useful-limits}, using $X^N = (W^N(t))^m-1$. It follows that the limit on the right hand side of~\eqref{eq:matrix-el} vanishes as well, thus proving the mesoscopic limit in~\eqref{app1-eq:mlim-ENt-to-EN}.

\end{proof}

\section{Mesoscopic dynamics: charge qubits}
\label{app2}

In this Appendix we provide the proof of Theorem~\ref{ch5.2-th:dyn} for the dynamics of a charge qubit system. In Section~\ref{subsec:rel-coords-and-notation} we begin by introducing the relative coordinates for the charge qubit system and some new notations.

%
\subsection{Notation and relative coordinates}
\label{subsec:rel-coords-and-notation}

As in the previous appendix, we shall adopt the following notation for vectors in the Hilbert space on the circle, and for quasi-spin vectors at \hbox{finite $N$}; 
for any $n_L,\,n_R\in\Z$:
\begin{align}
&\ket{n_{L},n_{R}} \equiv e^{i(n_L\hat\varphi_L+n_R\hat\varphi_R)}\ket{0}\ , \\
&\ket{n_{L},n_{R}}_\beta^N \equiv 
\begin{cases}
(E_{L,+}^{N})^{n_L}(E_{R,+}^{N})^{n_R}\ket{\omeganb}\,&\textup{if $n_L\ge0$, $n_R\ge0$} \\
(E_{L,-}^{N})^{-n_L}(E_{R,+}^{N})^{n_R}\ket{\omeganb}\,&\textup{if $n_L<0$, $n_R\ge0$} \\
(E_{L,+}^{N})^{n_L}(E_{R,-}^{N})^{-n_R}\ket{\omeganb}\,&\textup{if $n_L\ge0$, $n_R<0$} \\
(E_{L,-}^{N})^{-n_L}(E_{R,-}^{N})^{-n_R}\ket{\omeganb}\,&\textup{if $n_L<0$, $n_R<0$} \ .\\
\end{cases}
\end{align}
The mesoscopic limit to be proved in Theorem~\ref{ch5.2-th:dyn} then reads 
\begin{equation}
\label{app2-eq:meso-lim-explicit-simplified-not}
    \lim_{N\to\infty} \braket{n_L',n_R' | U^N(t) |n_L,n_R}_{N,\beta} = \braket{n_L',n_R' | U(t) |n_L,n_R}\ , \quad \forall n_{L/R},n_{L/R}'\in\Z\ .
\end{equation}
Let us introduce the \emph{relative coordinates} for both the microscopic and mesoscopic system:
\begin{subequations}
\label{eq:relative-coords-def}
\begin{align}
& \hat p_\varphi \equiv \frac{\hat p_{\varphi_L}-\hat p_{\varphi_R}}{2}\ , \qquad e^{i\hat\varphi} \equiv e^{i\hat\varphi_L}e^{-i\hat\varphi_R}\ ,    \\
& p^N \equiv \frac{p_{\varphi_L^N}-p_{\varphi_R^N}}{2}\ , \qquad (\mathfrak{E}_\pm^N)^m \equiv (E_{\pm,L}^{N})^m(E_{\mp,R}^N)^m\ ,\quad m>0\ .
\end{align}
\end{subequations}
The usual algebraic relations for momentum-angle variables hold, that is, for $\alpha\in\R$ and $m\in\N$,
\begin{eqnarray}
\label{app2-eq:comm-rels-a}
&& [\hat{p}_\varphi,e^{\pm i\hat\varphi}] = \pm e^{i\hat\varphi}\ , \qquad 
e^{i\alpha \hat p_\varphi}e^{ im\hat\varphi} = e^{ im\alpha} e^{ im\hat\varphi}e^{i\alpha \hat p_\varphi}\ ,  \\
\label{app2-eq:comm-rels-b}
&& \Bigl[p^N\,,\mathfrak{E}_\pm^N\Bigr] = \pm \mathfrak{E}_\pm^N\ , \qquad 
e^{i\alpha p^N}(\mathfrak{E}_\pm^N)^{ m}   = e^{ \pm im\alpha} (\mathfrak{E}_\pm^N)^m e^{i\alpha p^N}\ .
\end{eqnarray}
Notice that the bound~\eqref{app0-eq:norm-E_N-explicit} extends directly to $\mathfrak{E}^N$:
\begin{equation}
\label{app2-eq:norm-E^N}
    \Bignorm{(\mathfrak{E}_\pm^N)^m}\le\left(\frac{1}{\mathfrak{c}_L \mathfrak{c}_R}\right)^m,\quad m>0\ .
\end{equation}
Recall the definitions of the capacitive term in~\eqref{H-capacity}, of the tunneling term in~\eqref{interaction}
and of the mesoscopic Hamiltonian~\eqref{mesoscopic-hamiltonian}:
\begin{eqnarray}
\label{H-pieces}
&& H^N_C = \mathcal{E}_\mathrm{C}(p^N-n_\mathrm{g})^2,\quad
    H^N_\mathrm{int} = \frac{\mathcal{E}_\mathrm{J}}{2}\bigl(\mathfrak{E}_+^N + \mathfrak{E}_-^N\bigr)\,,\\
\label{app2-eq:mesoscopic-hamiltonian}
&&
\label{circle-H}
H =\mathcal{E}_\mathrm{C}(\hat p_\varphi-n_\mathrm{g})^2+\mathcal{E}_\mathrm{J} \cos\hat \varphi\ .
\end{eqnarray}

Let us now split both the finite-$N$ Hamiltonian (\ref{charge-qubit-hamiltonian}) and the mesoscopic one in two pieces, a free Hamiltonian and a perturbing term:
\begin{equation}
   \mathfrak{H}^N = H^N_0+H^N_1,\quad H = H_0+H_1.
\end{equation}
where
\begin{subequations}
\begin{gather}
    H^N_0 =H^N_\mathrm{free}+H^N_\mathrm{C}\ ,\qquad H^N_1 = H^N_\mathrm{int}\ , \\
    H_0 = 4\mathcal{E}_\mathrm{C}(\hat p_\varphi-n_\mathrm{g})^2\ ,\qquad H_1 =\mathcal{E}_\mathrm{J}\cos\hat\varphi\ ,
\end{gather}
\end{subequations}
with corresponding  time-evolution operators:
\begin{equation}
    U^N_0(t) = e^{-itH^N_0}\ , \qquad U_0(t) = e^{-itH_0}\ .
\end{equation}
Notice that the free microscopic dynamics $U^{N}_{0}(t)$ generated by $H^N_0$ is made of two commuting contributions: one coming from the BCS Hamiltonian, and the other from the capacitive part, namely
\begin{equation}\label{app2-eq:micro-free-evolutor-factorization}
    U^N_0(t) = U^N_\mathrm{free}(t)\, U^N_\mathrm{C}(t)\ ,
\end{equation}
where
\begin{equation*}
    U^N_\mathrm{free}(t) = e^{-it\,H^N_\mathrm{free}}\ ,\qquad U^N_\mathrm{C}(t) = e^{-itH^N_\mathrm{C}}\ .
\end{equation*}

\subsection{Josephson Junctions}
Theorem~\ref{ch5.1-th:dyn} states that, in the case of a single supercondutor, 
the term corresponding to $U^N_\mathrm{free}(t)$, does not contribute 
to the mesoscopic dynamics. Clearly, the same result holds in presence of two independent superconductors:
\begin{equation}\label{app2-eq:bcs-vanish}
\lim_{N\to\infty}\braket{n_L',n_R' | U^N_\mathrm{free} |n_L,n_R} = \delta_{n_L,n_L'}\,\delta_{n_L,n_L'}\ ,
\end{equation}
for any $n_{L/R},\,n_{L/R}'\in\Z$. Analogously, extending the definition in~\eqref{app1-ch5.1-th:dyn-eq:E^N_t-def} to two superconductors,
\begin{equation}
\label{app2-eq:E^N_t-def}
    \mathfrak{E}^N_\pm(t) \equiv U^N_\mathrm{free}(t)\ \mathfrak{E}_\pm^N\ \bigl(U^N_\mathrm{free}(t)\bigr)\dag\ .
\end{equation}
Moreover, from Proposition~\ref{app1-prop:E^N_t-vanish} it follows that
\begin{equation}
\label{app2-eq:time-ev-phase-op-vanish}
\mlim_{N\to\infty}\biggl( \prod_{j=1}^r (\mathfrak{E}_-^N(t))^{n_j} (\mathfrak{E}_+^N(t))^{m_j}\biggr) = \mlim_{N\to\infty} (\mathfrak{E}_-^N)^{n} (\mathfrak{E}_+^N)^{m} = \delta_{n,m}.
\end{equation}
Notice also that $\mathfrak{E}^N_\pm(t)$ satisfy the usual algebraic relations, thanks to the commutativity of $H^N_\mathrm{free}$ and $p^N$:
\begin{equation}\label{app2-eq:comm-rels-time-dep}
    \Bigl[p^N,\, \mathfrak{E}^N_\pm(t)\Bigr] = \pm \mathfrak{E}^N_\pm(t)\ ,\qquad  e^{i\alpha p^N}(\mathfrak{E}_\pm^N(t))^{ m} = 
    e^{\pm im\alpha} (\mathfrak{E}_\pm^N(t))^{ m} e^{i\alpha p^N}\ ,\quad \alpha\in\R,\ m\in\N\ .
\end{equation}

The capacitive Hamiltonians $H_\mathrm{C}=\mathcal{E}_\mathrm{C}(\hat p_\varphi-n_\mathrm{g})^2 $ and $H^{N}_\mathrm{C}$ are instead functions only of the relative angular momenta $\hat p_\varphi$ and  $p^N$.
In order to compactify the notation, let
\begin{equation}\label{app2-eq:xi-def}
    \xi(X) \equiv \mathcal{E}_\mathrm{C}(X-n_\mathrm{g})^2\ ,
\end{equation}
for any operatorial or scalar quantity $X$. 
For instance, $H_\mathrm{C}=\xi(p_\varphi)$, $H^{N}_\mathrm{C}=\xi(p^{N}_\varphi)$, so that their eigenvalue equations read:
\begin{gather}\label{app2-eq:capacitor-spectrum}
    H_\mathrm{C}\ket{n_{L},n_{R}} = \xi(n)\ket{n_{L},n_{R}}\ , \\
    H_\mathrm{C}^N\ket{n_{L},n_{R}}_\beta^N = \xi(n)\ket{n_{L},n_{R}}_\beta^N\ , 
\end{gather}
with
\begin{equation}\label{app2-eq:n-def}
    n = \frac{n_L-n_R}{2}\ ,
\end{equation}
where one exploits~\eqref{eq:relative-coords-def}. Moreover, for operatorial and scalar quantities $X,Y$ we set
\begin{equation}\label{app2-eq:dxi-def}
    \Delta\xi(X,Y) \equiv \xi(X+Y)-\xi(X),
\end{equation}
so that 
\begin{equation}
    \Delta\xi(X, \pm 1) = \xi(X\pm1)-\xi(X) =  \mathcal{E}_\mathrm{C}\Big(1\pm2(X-n_\mathrm{g})\Big)\ ,
\end{equation}
is a linear function of $X$. Therefore, exploiting \eqref{app2-eq:comm-rels-a},  \eqref{app2-eq:comm-rels-b} 
and~\eqref{app2-eq:comm-rels-time-dep} we can write, for $m\in\N$ and $t\in\R$,
\begin{subequations}
\begin{gather}\label{app2-eq:dxi-eiphi-commutation}
    e^{-it\Delta\xi(\hat p_\varphi,\pm 1)} e^{im\hat\varphi} = e^{im\hat\varphi} e^{-it\Delta\xi(\hat p_\varphi+m,\pm 1)}\ ,    \\
    e^{-it\Delta\xi(p^N,\pm 1)} (\mathfrak{E}_+^N)^m = (\mathfrak{E}_+^N)^m\, e^{-it\Delta\xi(p^N\pm m,\pm 1)}\ , \\
    e^{-it\Delta\xi(p^N,\pm 1)} (\mathfrak{E}^N_+(t))^m = (\mathfrak{E}^N_+(t))^m e^{-it\Delta\xi(p^N\pm m,\pm 1)}\ .
\end{gather}
\end{subequations}
Analogous expressions hold true for the operators $\mathfrak{E}_-^N$ and $\mathfrak{E}^N_-(t)$.

\subsection{The Dyson series: useful results}\label{app2-sec:dyson-general}
Let us consider $H^N_1$, $H_1$ as perturbations with respect to the free Hamiltonians $H^N_0$ and $H_0$. Let us define the time-dependent potentials
\begin{subequations}
\begin{gather}
    V^N(t) \equiv U^N_0(t) H_1^N \bigl( U^N_0(t) \bigr)\dag\  , \label{app1-eq:V^N(t)-def} \\
    V(t) \equiv U_0(t) H_1 U_0\dag(t), \label{app1-def:time-dep-pot}
\end{gather}
\end{subequations}
and  introduce the finite $N$, quasi-spin and mesoscopic Dyson series
\begin{subequations}
\label{eqs:dyson-def}
\begin{align}
\mathcal{D}^N(t) \equiv 1 + \sum_{k=1}^{\infty}(-i)^{k}	\int_0^t dt_1\cdots \int_0^{t_{k-1}} dt_{k} V^N(t_{k})\cdots V^N(t_{1})\ ,  \label{DysonN} \\
\mathcal{D}(t) \equiv 1 + \sum_{k=1}^{\infty}(-i)^{k}	\int_0^t dt_1\cdots \int_0^{t_{k-1}} dt_{k} V(t_{k})\cdots V(t_{1})\ . \label{app2-eq:dyson-series-def} 
\end{align}
\end{subequations}
The convergence of the two series on their respective Hilbert spaces is the content of the following Lemma.
\begin{lemma}
\label{app2-prop:dyson}
The Dyson serie $\mathcal{D}(t)$ in~\eqref{app2-eq:dyson-series-def} converges in norm  to $U(t)U_0\dag(t)$ and $\mathcal{D}^N(t)$ in~\eqref{DysonN}
to $U^N(t)\bigl(U^N_0(t)\bigr)\dag$, the latter convergence being uniform with respect to $N$.
\end{lemma}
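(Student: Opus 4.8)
The plan is to treat $H_1^N$ and $H_1$ as \emph{bounded} perturbations of the (generally unbounded) free generators $H_0^N$, $H_0$, and to invoke the standard norm-convergence theory of the interaction-picture (Dyson) expansion. The only model-specific input is a bound on the interaction strength that is uniform in $N$, and it is this bound that produces the uniformity statement. Concretely, from $H_1=\mathcal{E}_\mathrm{J}\cos\hat\varphi=\tfrac{\mathcal{E}_\mathrm{J}}{2}(e^{i\hat\varphi}+e^{-i\hat\varphi})$ and the unitarity of the Weyl operators one gets $\norm{H_1}\le\mathcal{E}_\mathrm{J}$, while from $H_1^N=\tfrac{\mathcal{E}_\mathrm{J}}{2}(\mathfrak{E}_+^N+\mathfrak{E}_-^N)$ together with the bound~\eqref{app2-eq:norm-E^N} one gets $\sup_N\norm{H_1^N}\le\mathcal{E}_\mathrm{J}/(\mathfrak{c}_L\mathfrak{c}_R)$, independent of $N$. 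Since $U_0(t)$ and $U_0^N(t)$ are unitary, conjugation preserves the norm, so $\norm{V(t)}=\norm{H_1}$ and $\norm{V^N(t)}=\norm{H_1^N}$ are constant in $t$ and, in the microscopic case, bounded uniformly in $N$ by some $M<\infty$.

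Next I would bound the generic term of each series. The $k$-th order term integrates a product of $k$ factors, each of norm $\le M$, over the simplex $\{t\ge t_1\ge\cdots\ge t_k\ge0\}$, whose volume is $\abs{t}^k/k!$; hence its norm is $\le(M\abs{t})^k/k!$. Consequently both~\eqref{DysonN} and~\eqref{app2-eq:dyson-series-def} are dominated term by term by $\sum_{k\ge0}(M\abs{t})^k/k!=e^{M\abs{t}}$, so they converge absolutely in operator norm, uniformly for $t$ in compact intervals. Because $M$ may be chosen independent of $N$, the tail $\sum_{k>K}(M\abs{t})^k/k!$ bounds the distance between $\mathcal{D}^N(t)$ and its $K$-th partial sum uniformly in $N$, and tends to $0$ as $K\to\infty$: this is exactly the asserted uniformity. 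Since $H_0$ is unbounded, the time integrals are to be read as strong integrals, but the elementary estimate $\norm{\int F}\le\int\norm{F}$ still holds, so none of the above is affected.

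Finally I would identify the limits. Differentiating the series term by term (justified by the uniform norm convergence) shows that $\mathcal{D}(t)$ solves the Duhamel equation $\dot{\mathcal{D}}(t)=-i\,\mathcal{D}(t)\,V(t)$ with $\mathcal{D}(0)=\identity$, and likewise $\dot{\mathcal{D}}^N(t)=-i\,\mathcal{D}^N(t)\,V^N(t)$. One then checks directly that $U(t)U_0\dag(t)$ solves the same equation: using $\dot U=-iHU$, $\tfrac{d}{dt}U_0\dag=iH_0U_0\dag$, the commutation $[U_0,H_0]=0$ and $H=H_0+H_1$, the computation collapses to the identity $[U(t),H]=0$, which holds; the microscopic case is identical with $\mathfrak{H}^N=H_0^N+H_1^N$. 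Uniqueness for this linear equation with bounded coefficient $V$ then forces $\mathcal{D}(t)=U(t)U_0\dag(t)$ and $\mathcal{D}^N(t)=U^N(t)\bigl(U^N_0(t)\bigr)\dag$. I expect the main obstacle to be bookkeeping rather than conceptual: one must handle with care the unboundedness of $H_0$ (working with strong integrals and closing the identification by a Gronwall-type uniqueness estimate rather than norm-continuity of $V$), and must make the time-ordering in the series match the placement of $V$ in the ODE and the definition $U\,U_0\dag$. The genuinely nontrivial ingredient, specific to this model and driving the uniformity, is simply the $N$-independent norm bound on $H_1^N$ furnished by~\eqref{app2-eq:norm-E^N}.
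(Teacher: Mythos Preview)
Your proposal is correct and rests on the same model-specific ingredient as the paper: the $N$-independent bound $\sup_N\norm{H_1^N}\le 2\lambda$ coming from~\eqref{app2-eq:norm-E^N} (note that $\mathcal{E}_\mathrm{J}=2\lambda\mathfrak{c}_L\mathfrak{c}_R$, so your constant $\mathcal{E}_\mathrm{J}/(\mathfrak{c}_L\mathfrak{c}_R)$ is precisely $2\lambda$). The estimate $(M\abs{t})^k/k!$ on the $k$-th term and the resulting uniform tail control are exactly what the paper obtains.

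Where you diverge from the paper is in the \emph{identification} of the limit. You first prove abstract norm convergence of the series and then argue separately, via the ODE $\dot{\mathcal{D}}=-i\,\mathcal{D}\,V$ and a Gronwall/uniqueness step, that the sum equals $U(t)U_0^\dagger(t)$; this forces you to be careful with the unbounded $H_0$ when differentiating $U(t)U_0^\dagger(t)$ (working on $D(H_0)=D(H)$ and then extending). The paper instead iterates the integral Duhamel identity
\[
U^N(t)\bigl(U_0^N(t)\bigr)^\dagger=1-i\int_0^t dt_1\,U^N(t_1)\bigl(U_0^N(t_1)\bigr)^\dagger V^N(t_1)
\]
finitely many times, which \emph{simultaneously} produces the $K$-th partial sum $D_K^N(t)$ and an explicit remainder $\Theta_{K+1}^N(t)$ whose norm is bounded by $\norm{H_1^N}^{K+1}t^{K+1}/(K+1)!$. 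This one-shot argument bypasses term-by-term differentiation and any domain considerations altogether, so it is slightly cleaner; your route is equally valid but carries the extra bookkeeping you yourself flagged. A minor point: your justification of term-by-term differentiation should invoke uniform convergence of the \emph{derivative} series (which holds by the same estimate, shifted by one index), not of the original series.
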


\begin{proof}
The proofs of the convergence of the two series are identical, thus we provide only that of $\mathcal{D}^N(t)$. Given the partial sums 
$$
D^N_{K}(t) \equiv 1 + \sum_{k=1}^{K}(-i)^{k}	\int_0^t dt_1\cdots \int_0^{t_{K-1}} dt_{K} V^N(t_{k})\cdots V^N(t_{1})\	,
$$
one has to show that that $\lim_{K\to\infty}\norm{U^N(t)\bigl(U^N_0(t)\bigr)\dag-D^N_K(t)} = 0$.
Writing
\begin{align*}
U^N(t)-U^N_0(t) &= \int_0^t dt_1 \frac{d}{dt_1}\bigl(U^N(t_1)U^N_0(t-t_1)\bigr)\\
&=-i \int_0^t dt_1 U^N(t_1)H^N_1U^N_0(t-t_1)\ ,
\end{align*}
and inserting~\eqref{app1-eq:V^N(t)-def}, one finds:
$$
U^N(t)\bigl(U^N_0(t)\bigr)\dag	= 1-i\int_0^t dt_1 U^N(t_1)\bigl(U^N_0(t_1)\bigr)\dag V^N(t_1) = D^N_{K}(t) + \Theta^N_{K+1}(t)\ ,
$$
where the remainder takes the form
\begin{equation*}
\Theta^N_{K+1}(t) \equiv (-i)^{K+1}\int_0^t dt_1\cdots \int_0^{t_{K}} dt_{K+1} U^N(t_{K+1})\bigl(U^N_0(t_{K+1})\bigr)\dag V^N(t_{K+1})\cdots V^N(t_{1})\ .
\end{equation*}
Therefore,
\begin{equation}
\label{ineqaux}
\Bignorm{	U^N(t)U_0(t)\dag	-	D^N_{K}(t)	} = \Bignorm{\Theta^N_{K+1}(t)} \le 	\Bignorm{H^N_1}^{K+1}\frac{t^{K+1}}{(K+1)!}\ .
\end{equation}
From~\eqref{H-pieces} and~\eqref{app2-eq:norm-E^N}, it follows that $H^N_1=H^N_\mathrm{int}$ is uniformly bounded with respect to $N$:
\begin{equation}
\label{app2-eq:norm-H^N_1-bound}
\Bignorm{H^N_1} = \lambda \mathfrak{c}_L\mathfrak{c}_R\Bignorm{\mathfrak{E}_+^N+\mathfrak{E}_-^N} \le 2 \lambda\ .
\end{equation}
Thus, when $K\to\infty$, the right hand side of the inequality~\eqref{ineqaux} vanishes uniformly with respect to $N$.
\end{proof}

\subsection{Conclusion of the proof of Theorem~\ref{ch5.2-th:dyn}}

We can now conclude the proof of Theorem~\ref{ch5.2-th:dyn}, which states that the evolution operator $U^N(t)$ converges in the mesoscopic limit 
to $U(t)$ (see~\eqref{mesoscopic-dynamics}). In order to do so, we first show that each term of the finite-size Dyson series $\mathcal{D}^N(t)$ converges to the corresponding one of $\mathcal{D}(t)$.

\begin{proposition}
\label{app1-prop:prod-vt-explicit}
For any choice of $k\in\N$ and of times $t_1,\ldots, t_k$, one has that
\begin{subequations}
\begin{align}
    & V^N(t_k)\cdots V^N(t_1) = \Bigl(\frac{\mathcal{E}_\mathrm{J}}{2}\Bigr)^k	\biggl[\sum_{\gamma_1=\pm}	\!\! \cdots \!\!\sum_{\gamma_k=\pm} \prod_{j=1}^k 
   (\mathfrak{E}^N_{\gamma_{k-j+1}}(t_{k-j+1})) \prod_{j=1}^ke^{ -it_j \Delta\xi(p^{N}_\varphi+\overline{\gamma}_{j-1},\gamma_j) } \biggr] \ , \label{eq:product-VNt-full-expression}\\
 & V(t_k)\cdots V(t_1) = \Bigl(\frac{\mathcal{E}_\mathrm{J}}{2}\Bigr)^k	\biggl[\sum_{\gamma_1=\pm}	\!\! \cdots \!\!\sum_{\gamma_k=\pm} \Bigl(\prod_{j=1}^k e^{i\gamma_{k-j+1}\varphi}\Bigr) \prod_{j=1}^ke^{ -it_j \Delta\xi(p_\varphi+\overline{\gamma}_{j-1},\gamma_j) } \biggr]\ , \label{eq:product-Vt-full-expression}
\end{align}
\end{subequations}
where we set $\overline{\gamma}_j \defsym \sum_{i=1}^j \gamma_j$. Furthermore, the following mesoscopic limit holds:
\label{final-prop}
\begin{equation}
\label{eq:mlim-dyson-term-by-term}
        \mlim_{N\to\infty} V^N(t_{k})\cdots V^N(t_{1})
        = V(t_{k})\cdots V(t_{1}) .
\end{equation}
\end{proposition}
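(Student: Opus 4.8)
The plan is to reduce each $V^N(t_j)$ and $V(t_j)$ to a single normal-ordered factor, build the product by induction, and extract the mesoscopic limit by evaluating matrix elements in the bases $\ket{n_L,n_R}_\beta^N$ and $\ket{n_L,n_R}$. First I would compute the single factors. Writing $H_1=\frac{\mathcal{E}_\mathrm{J}}{2}(e^{i\hat\varphi}+e^{-i\hat\varphi})$ and using that $U_0(t)$ depends only on $\hat p_\varphi$, the relations \eqref{app2-eq:comm-rels-a} give $U_0(t)\,e^{i\gamma\hat\varphi}\,U_0^\dagger(t)=e^{i\gamma\hat\varphi}\,e^{-it\Delta\xi(\hat p_\varphi,\gamma)}$, whence $V(t)=\frac{\mathcal{E}_\mathrm{J}}{2}\sum_{\gamma=\pm}e^{i\gamma\hat\varphi}\,e^{-it\Delta\xi(\hat p_\varphi,\gamma)}$. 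For the microscopic side I would use the factorization \eqref{app2-eq:micro-free-evolutor-factorization}, $U^N_0=U^N_\mathrm{free}\,U^N_\mathrm{C}$: conjugating $\mathfrak{E}^N_\gamma$ by $U^N_\mathrm{C}=e^{-it\xi(p^N)}$ via \eqref{app2-eq:comm-rels-b} yields $\mathfrak{E}^N_\gamma\,e^{-it\Delta\xi(p^N,\gamma)}$, after which conjugation by $U^N_\mathrm{free}$ (which commutes with $p^N$) turns $\mathfrak{E}^N_\gamma$ into $\mathfrak{E}^N_\gamma(t)$ through \eqref{app2-eq:E^N_t-def} and leaves the $p^N$-phase untouched, giving the exact analogue $V^N(t)=\frac{\mathcal{E}_\mathrm{J}}{2}\sum_{\gamma=\pm}\mathfrak{E}^N_\gamma(t)\,e^{-it\Delta\xi(p^N,\gamma)}$ with the dictionary $\mathfrak{E}^N_\gamma(t)\leftrightarrow e^{i\gamma\hat\varphi}$, $p^N\leftrightarrow\hat p_\varphi$.

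Next I would prove \eqref{eq:product-VNt-full-expression} and \eqref{eq:product-Vt-full-expression} by induction on $k$, multiplying the single factors and commuting every phase operator to the left of the capacitive exponentials. Because $\Delta\xi(\,\cdot\,,\pm1)$ is linear in its argument, each capacitive factor is the exponential of a linear function of $p^N$ (resp.\ $\hat p_\varphi$), so moving $\mathfrak{E}^N_{\gamma_j}(t_j)$ (resp.\ $e^{i\gamma_j\hat\varphi}$) past it is governed exactly by \eqref{app2-eq:comm-rels-time-dep} (resp.\ \eqref{app2-eq:dxi-eiphi-commutation}) and merely shifts the argument of $\Delta\xi$ by $\gamma_j$. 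Accumulating these shifts reproduces the offsets $\overline\gamma_{j-1}=\sum_{i=1}^{j-1}\gamma_i$; since these commutations are exact at finite $N$, the two displayed formulas hold with no remainder.

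For the mesoscopic limit \eqref{eq:mlim-dyson-term-by-term} I would then insert the normal-ordered form into $\braket{n_L',n_R'|V^N(t_k)\cdots V^N(t_1)|n_L,n_R}_\beta^N$. All capacitive phases now stand to the right of the phase operators, so they act first on $\ket{n_L,n_R}_\beta^N$, which by \eqref{app2-eq:capacitor-spectrum}--\eqref{app2-eq:n-def} is an exact eigenvector of $p^N$ with eigenvalue $n=(n_L-n_R)/2$; each factor therefore collapses to the scalar $e^{-it_j\Delta\xi(n+\overline\gamma_{j-1},\gamma_j)}$, leaving the matrix element of $\prod_j\mathfrak{E}^N_{\gamma_{k-j+1}}(t_{k-j+1})$. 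By Proposition~\ref{app1-prop:E^N_t-vanish} and its two-layer form \eqref{app2-eq:time-ev-phase-op-vanish}, together with Definition~\ref{def:mesoscopic-lim-double}, this converges to the matrix element of $\prod_j e^{i\gamma_{k-j+1}\hat\varphi}$. On the mesoscopic side $\ket{n_L,n_R}$ is a $\hat p_\varphi$-eigenvector with the same eigenvalue $n$, so the identical scalars $e^{-it_j\Delta\xi(n+\overline\gamma_{j-1},\gamma_j)}$ factor out; summing over $\{\gamma_j\}$ the two limits coincide term by term, which is \eqref{eq:mlim-dyson-term-by-term}.

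The hard part will be this last step: the product $\prod_j\mathfrak{E}^N_{\gamma_{k-j+1}}(t_{k-j+1})$ carries an arbitrary pattern of signs, whereas \eqref{app2-eq:time-ev-phase-op-vanish} controls the limit only after the operators have been regrouped into the canonical order $(\mathfrak{E}^N_-(t))^{\bullet}(\mathfrak{E}^N_+(t))^{\bullet}$. I would close this gap by showing that such a reordering, and the regrouping of equal-sign factors, costs only $\mathcal{O}(N^{-1})$, which follows from the vanishing of the relevant commutators as in \eqref{app0-eq:comm-E^N} and \eqref{app1-eq:comm-E^N-W^N(t)-vanish}; careful tracking of the accumulated shifts $\overline\gamma_{j-1}$ in the scalar phases is the other point demanding attention.
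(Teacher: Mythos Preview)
Your proposal is correct and follows essentially the same route as the paper: compute the single factor $V(t)$ (resp.\ $V^N(t)$) in the form $\frac{\mathcal{E}_\mathrm{J}}{2}\sum_\gamma e^{i\gamma\hat\varphi}e^{-it\Delta\xi(\hat p_\varphi,\gamma)}$, build the product by induction using the exact commutation rules \eqref{app2-eq:dxi-eiphi-commutation}, and then pass to matrix elements where the capacitive phases collapse to scalars and one is left with the mesoscopic limit of a product of $\mathfrak{E}^N_\gamma(t_j)$'s, handled via \eqref{app2-eq:time-ev-phase-op-vanish}.

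One remark: the ``hard part'' you flag---that \eqref{app2-eq:time-ev-phase-op-vanish} is stated for a single time $t$ whereas the product here carries distinct times $t_1,\ldots,t_k$---is a genuine subtlety that the paper simply glosses over by invoking \eqref{app2-eq:time-ev-phase-op-vanish} directly. Your proposed fix is the right one and is already implicit in the proof of Proposition~\ref{app1-prop:E^N_t-vanish}: each $\mathfrak{E}^N_\gamma(t_j)=\mathfrak{E}^N_\gamma\,W^N_{L,R}(t_j)$ via \eqref{app1-eq:E^N_t-explicit-expr}, and the unitaries $W^N(t_j)$ commute with $\mathfrak{E}^N_\pm$ up to $\mathcal{O}(N^{-1})$ by \eqref{app1-eq:comm-E^N-W^N(t)-vanish} regardless of which $t_j$ they carry, so the multi-time product reduces to $(\mathfrak{E}^N_-)^n(\mathfrak{E}^N_+)^m$ times a product $\prod_j W^N(t_j)^{\pm 1}$ acting on $\ket{\omeganb}$, which converges to the identity by the same argument as \eqref{app1-eq:W^N(t)-vanish-norm}. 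So you are being more careful than the paper, not diverging from it.
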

\begin{proof}
We prove~\eqref{eq:product-Vt-full-expression}, the $N$-dependent case in~\eqref{eq:product-VNt-full-expression} being analogous. The derivation relies only on \eqref{app2-eq:comm-rels-a}, \eqref{app2-eq:comm-rels-b} and \eqref{app2-eq:comm-rels-time-dep}
using induction.
To start, set $k=1$. Using the Baker-Campbell-Hausdorff formula and applying~\eqref{app2-eq:comm-rels-a}, 
together with the definition 
of $\Delta\xi$ in in~\eqref{app2-eq:dxi-def}, one obtains
$$
U_0(t)\, e^{\pm i\hat\varphi}\, U_0\dag(t) = e^{\pm i\hat\varphi}e^{-it\Delta\xi(\hat p_\varphi,\pm1)}\ ,
$$
where also the identity:
\begin{equation*}
\Delta\xi(\hat p_{\varphi}-1,1) = \xi(\hat p_{\varphi})-\xi(\hat p_{\varphi}-1) = -\Delta\xi(\hat p_{\varphi},-1)\ ,
\end{equation*}
has been used.
Substituting the expression into the definition of $V(t)$ yields
\begin{equation*}\label{app1-eq:asdkcgvads}
    V(t) = \frac{\mathcal{E}_\mathrm{J}}{2}\sum_{\gamma=\pm} e^{i\gamma\hat\varphi} e^{-it\Delta\xi(\hat p_{\varphi},\gamma)},
\end{equation*}
which indeed coincides with what one has to prove for $k=1$.
On the other hand,
\begin{equation}\label{sec:3-th:1-eq:2}
V(t_{k})V(t_{k-1})\dots V(t_1)=
\frac{\mathcal{E}_\mathrm{J}}{2}\sum_{\gamma_{k}=\pm}e^{i\gamma_{k}\hat\varphi}e^{-it_{k}\Delta\xi(\hat p_\varphi,\gamma_{k})}	V(t_{k-1})\dots V(t_1),
\end{equation}
where we substituted for $V(t_{k})$ the previously obtained expression. The result follows by assuming $V(t_{k-1})\dots V(t_1)$ 
to have the desired form and then using~\eqref{app2-eq:dxi-eiphi-commutation} to bring the exponential generated by $\Delta\xi$ to the right of the formula.

We want now to show the validity of the mesoscopic limit in~\eqref{eq:mlim-dyson-term-by-term}. Using the expression~\eqref{eq:product-Vt-full-expression} which we have just derived, we have for $n_L,\,n_R,\,n'_L,\,n'_R\in\Z$,
\begin{multline}\label{app2-prop:final-eq:1}
	\braketmeso{V(t_{k})\cdots V(t_{1})} \\ 
	= \Bigl(\frac{\mathcal{E}_\mathrm{J}}{2}\Bigr)^k 
	\biggl(	\sum_{\gamma_1=\pm}\!\!	\dots	\!\!	\sum_{\gamma_k=\pm}	\delta_{n',n+\overline{\gamma}_k}
	\prod_{j=1}^k e^{-it_j\Delta\xi(n+\overline{\gamma}_{j-1},\gamma_j)} \Bigr)\ \delta_{n_L'+n_R',n_L+n_R} \ ,
\end{multline}
where we also set, in analogy with~\eqref{app2-eq:n-def},
\begin{equation}
    n'= \frac{n_L'-n_R'}{2}\ .
\end{equation}
On the other hand, the expression~\ref{eq:product-VNt-full-expression} yields
\begin{multline}\label{app2-prop:final-eq:3}
    \braketgns{{V}^N(t_{k})\cdots {V}^N(t_{1})}   \\
    = \Bigl(\frac{\mathcal{E}_\mathrm{J}}{2}\Bigr)^k	\biggl(\sum_{\gamma_1=\pm}	\!\! \cdots \!\!\sum_{\gamma_k=\pm} \Braketgns{\prod_{j=1}^k 
  \mathfrak{E}_{\gamma_{k-j+1}}^N(t_{k-j+1})} \prod_{j=1}^ke^{ -it_j \Delta\xi(n+\overline{\gamma}_{j-1},\gamma_j) } \biggr)\ .
\end{multline}
The limit follows then directly from~\eqref{app2-eq:time-ev-phase-op-vanish}, which states that the product of an arbitrary number of phase operators $\mathfrak{E}_{\pm}^N(t)$ converges to the corresponding product of Weyl operators in the Heisenberg algebra on the circle.
\end{proof}

\noindent

\dynamicsqubit*
\begin{proof}
Let us define the quantity
\begin{gather*}
I^N(t) \equiv \Bigabs{\braketgns{U^N(t)} - \braketmeso{U(t)}},
\end{gather*}
where once again $n_L,\,n_R,\,n'_L,\,n'_R\in\Z$. We want to show that $I^N(t)$ goes to zero as $N$ grows large. Thanks to the results of Lemma~\ref{app2-prop:dyson}, we can replace the evolution operators $U^N(t)$ and $U(t)$ with their respective Dyson series, which converge in norm to the evolution operators. Therefore, for a chosen $\epsilon>0$,  we can truncate the Dyson series at such a large $K$, that independently of $N$,
$$
\Bignorm{U^N(t)-\mathcal{D}^N_K(t)U^N_0(t)}\leq\frac{\epsilon}{3}\ ,\quad \Bignorm{U(t)-\mathcal{D}_K(t)U_0(t)}\leq\frac{\epsilon}{3}\ ,
$$
and estimate $\displaystyle I^N(t)\leq I^N_K(t)\,+\,2\,\frac{\epsilon}{3}$, where
\begin{gather*}
 I^N_K(t) \equiv \Bigabs{\braketgns{\mathcal{D}^N_K(t)U^N_0(t)} - \braketmeso{\mathcal{D}_K(t)U_0(t)}}.
\end{gather*}
Notice that
\begin{equation}
\label{eq:INK_bound}
\begin{split}
I^N_K(t) \, \le \, \sum_{k=1}^K \int_0^t dt_1\cdots \int_0^{t_{k-1}} dt_{k}\, \Bigabs{J^N(t_1,\cdots,t_k)}
\end{split}
\end{equation}
where, since $U^N_0(t) = U^N_\mathrm{free}(t)\, U^N_\mathrm{C}(t)$ (see~\eqref{app2-eq:micro-free-evolutor-factorization}),
\begin{equation*}
\begin{split}
\Bigabs{J^N(t_1,\cdots,t_k)} \equiv \, &\Bigl\lvert \braketgns{V^N(t_{k})\cdots V^N(t_{1})U^N_\mathrm{free}(t)\, U^N_\mathrm{C}(t)} - \\
					& - \braketmeso{V(t_{k})\cdots V(t_{1})U_0(t)} \Bigr\rvert.
\end{split}
\end{equation*}
As we shall shortly see, by choosing $N$ large enough, one can make $\Bigabs{J^N(t_1,\cdots,t_k)}$ arbitrarily small so that, because of the finite sum and finite time $t$ in the right hand side of~\eqref{eq:INK_bound},  $\displaystyle I^N_K(t)\leq\frac{\epsilon}{3}$ and thus $I^N(t)\leq\epsilon$.

Indeed, acting on the corresponding vectors, $U^N_\mathrm{C}(t)$ and $U_0(t)$ give rise to the same phase factor, which thus drops out. We can then proceed with adding and subtracting the same term 
\begin{equation}
\label{eq:J-bound}
\begin{split}
\Bigabs{J^N(t_1,\cdots,t_k)}	=	& \, \Bigabs{ J^N(t_1,\cdots,t_k) + \braketgns{V^N(t_{k})\cdots V^N(t_{1})} - \\
								& - \braketgns{V^N(t_{k})\cdots V^N(t_{1})} } \\
							\le	& \, \Bigabs{ J^N_1(t_1,\cdots,t_k) } + \Bigabs{ J^N_2(t_1,\cdots,t_k) },
\end{split}
\end{equation}
where, using~\eqref{app2-eq:bcs-vanish} and Proposition~\ref{app1-prop:prod-vt-explicit}, the following two quantities
\begin{gather*}
J^N_1(t_1,\cdots,t_k) \equiv \braketgns{V^N(t_{k})\cdots V^N(t_{1})\bigl(U^N_\mathrm{free}(t)-1\bigr)}; \\
J^N_2(t_1,\cdots,t_k) \equiv \braketgns{V^N(t_{k})\cdots V^N(t_{1})} - \braketmeso{V(t_{k})\cdots V(t_{1})}.
\end{gather*}
can be made arbitrarily small by choosing $N$ large enough . 
\end{proof}

\vskip 2cm


\end{document}